\documentclass{article}
\usepackage{a4wide}
\usepackage{latexsym}
\usepackage{amsmath}
\usepackage{amssymb}
\usepackage{amsthm}
\usepackage{ifthen}
\usepackage{mathrsfs}
\usepackage{graphics}
\usepackage{color}
\usepackage{algpseudocode}
\usepackage[ruled,vlined,linesnumbered]{algorithm2e}
\usepackage{tikz}
\usetikzlibrary{calc}
\usepackage{amsmath}
\usetikzlibrary{decorations.pathreplacing}

\usepackage{hyperref}

\usepackage[symbol]{footmisc} 
\usepackage{todonotes}

\newcommand{\nc}{\newcommand}
\nc{\rnc}{\renewcommand} \nc{\nev}{\newenvironment}

\nc{\W}[1]{\text{$\textup{W}[#1]$}}
\nc{\FPT}{\textup{FPT}}
\nc{\fpt}{\textup{fpt}}

\newlength{\probwidth}
\setlength{\probwidth}{8cm}

\nc{\prob}[3][9]{
\begin{center}
  \normalfont\fbox{
   \begin{tabular}[t]{
     rp{#1cm}}\textit{Instance:}&#2. \\
     \textit{Problem:}&#3
   \end{tabular}}
\end{center}}

\nc{\pprob}[4][9]{
\begin{center}
   \normalfont\fbox{
    \begin{tabular}[t]{
     rp{#1cm}}\textit{Instance:}&#2. \\
     \textit{Parameter:}&#3. \\
     \textit{Problem:}&#4
   \end{tabular}}
\end{center}}

\nc{\nprob}[4][9]{
\begin{center}
  \normalfont\fbox{

\addtolength{\probwidth}{#1cm}\parbox{\probwidth}{\textsc{#2}\\\hspace*{1.5em}
     \begin{tabular}[t]{
      rp{#1cm}}\textit{Instance:}&#3. \\
      \textit{Problem:}&#4
     \end{tabular}}}
\end{center}}

\nc{\npprob}[5][9]{
\begin{center}
  \normalfont\fbox{

\addtolength{\probwidth}{#1cm}\parbox{\probwidth}{\textsc{#2}\\\hspace*{1.5em}
    \begin{tabular}[t]{
     rp{#1cm}}\textit{Instance:}&#3. \\
     \textit{Parameter:}&#4. \\
     \textit{Problem:}&#5
    \end{tabular}}}
\end{center}}

\nc{\nppxrob}[5][9]{ \normalfont\fbox{

\addtolength{\probwidth}{#1cm}\parbox{\probwidth}{\textsc{#2}\\\hspace*{1.5em}
   \begin{tabular}[t]{
    rp{#1cm}}\textit{Instance:}&#3. \\
    \textit{Parameter:}&#4. \\
    \textit{Problem:}&#5
   \end{tabular}}}}

\nc{\nppprob}[5][4]{
\begin{center}
  \normalfont\fbox{

\addtolength{\probwidth}{#1cm}\parbox{\probwidth}{\textsc{#2}\\\hspace*{1.5em}
    \begin{tabular}[t]{
     rp{#1cm}}\textit{Instance:}&#3. \\
     \textit{Parameter:}&#4. \\
     \textit{Problem:}&#5
    \end{tabular}}}
\end{center}}

\nc{\dotcup}{\;\dot\cup\;}

\nc{\noptprob}[6][9]{
\begin{center}
  \normalfont\fbox{

\addtolength{\probwidth}{#1cm}\parbox{\probwidth}{\textsc{#2}\\\hspace*{0em}
    \begin{tabular}[t]{
     rp{10cm}}\textit{Instance:}&#3. \\
     \textit{Solution:}&#4. \\
     \textit{Cost:}&#5. \\
     \textit{Goal:}&#6.
    \end{tabular}}}
\end{center}}

\newtheorem{theo}{Theorem}[section]
\newtheorem{theorem}[theo]{Theorem}
\newtheorem{lemma}[theo]{Lemma}

\newtheorem{definition}[theo]{Definition}
\newtheorem{remark}[theo]{Remark}

\newtheorem{claim}[theo]{Claim}


\bibliographystyle{plainurl}

\usepackage{authblk}

\title{A Simple  Gap-producing Reduction for  the Parameterized Set Cover Problem}
\author{Bingkai Lin\\
\footnotesize National Institute of Informatics\\
\footnotesize \texttt{lin@nii.ac.jp} \\ }

\begin{document}

\maketitle
\begin{abstract}
Given an $n$-vertex bipartite graph $I=(S,U,E)$, the goal of set cover problem is to find a minimum sized subset of $S$ such that every vertex in $U$ is adjacent to some vertex of this subset.  It is NP-hard to approximate  set cover  to within a $(1-o(1))\ln n$ factor~\cite{dinste14}. If we use the size of the optimum solution  $k$ as the parameter, then it can be solved in $n^{k+o(1)}$ time~\cite{eisenbrand2004complexity}. A natural question is: can we approximate set cover to within an $o(\ln n)$ factor  in $n^{k-\epsilon}$ time?

In a recent breakthrough result~\cite{karthik2017parameterized}, Karthik, Laekhanukit and Manurangsi showed that assuming the Strong Exponential Time Hypothesis (SETH), for any computable function $f$, no $f(k)\cdot n^{k-\epsilon}$-time algorithm  can approximate set cover to a factor below $(\log n)^{\frac{1}{poly(k,e(\epsilon))}}$ for some function $e$.

This paper presents a simple gap-producing reduction which, given a set cover instance $I=(S,U,E)$  and two integers  $k<h\le (1-o(1))\sqrt[k]{\log |S|/\log\log |S|}$, outputs a new set cover instance $I'=(S,U',E')$ with $|U'|=|U|^{h^k}|S|^{O(1)}$ in $|U|^{h^k}\cdot |S|^{O(1)}$ time such that
\begin{itemize}
\item if $I$ has a $k$-sized solution, then so does $I'$;
\item if $I$ has no $k$-sized solution, then every solution of $I'$ must contain at least $h$ vertices.
\end{itemize}
Setting $h=(1-o(1))\sqrt[k]{\log |S|/\log\log |S|}$, we show that assuming SETH, for any computable function $f$, no $f(k)\cdot n^{k-\epsilon}$-time algorithm  can distinguish between  a set cover instance with $k$-sized solution and one whose minimum solution size is at least  $(1-o(1))\cdot \sqrt[k]{\frac{\log n}{\log\log n}}$. This improves  the result in~\cite{karthik2017parameterized}.

\end{abstract}

\section{Introduction}
We consider the \emph{set cover} problem  (\textsc{SetCover}): given an $n$-vertex bipartite graph $I=(S,U,E)$, where $U$ is the underlying universe set and $S$ represents the set family, find a minimum sized subset $C$ of $S$ such that every vertex of $U$ is adjacent to some vertex of $C$.  
We use $S(I)$, $U(I)$ and $opt(I)$ to denote the sets $S$, $U$ and the minimum size of the solution of $I$ respectively. A vertex $u\in U$ is \emph{covered} by a subset $C\subseteq S$ if $u$ is adjacent to some vertex of $C$. 
The set cover problem is NP-hard~\cite{kar72}. Unless $P=NP$, we do not expect to solve it in polynomial time. One way to handle NP-hard problems is to use approximation algorithms.
An algorithm of \textsc{SetCover} achieves an $r$-approximation if for every input instance $I$, it returns a subset $C$ of $S(I)$ such that $C$ covers $U(I)$ and $|C|\le r\cdot opt(I)$.
The polynomial time approximability of  \textsc{SetCover} is well-understood: the greedy algorithm can output a solution of size at most $opt(I)\cdot (1+\ln n)$~\cite{cha79,joh74,lov75,sla97,ste74} and it was shown that no polynomial time algorithm can achieve an approximation factor  within $(1-o(1))\ln n$ unless $P=NP$~\cite{alon2006algorithmic,dinste14,feige1998threshold,lunyan94,razsaf97}.
On the other hand, if we take the optimum solution size $k=opt(I)$ as a parameter, then the simple brute-force searching algorithm can solve this problem in $n^{k+1}$ time. Assuming the exponential time hypothesis (ETH)~\cite{impagliazzo2001complexity,impagliazzo2001problems}, i.e.,  $3$-SAT on $n$ variables cannot be solved in $2^{o(n)}$ time, there is no $n^{o(k)}$ time algorithm for \textsc{SetCover}. Under the strong exponential time hypothesis (SETH)~\cite{impagliazzo2001complexity,impagliazzo2001problems}, which claims that for any $\epsilon\in (0,1)$ there exists a $d\ge 3$ such that $d$-SAT on $n$ variables cannot be solved in $2^{(1-\epsilon)n}$  time, we can further rule out $n^{k-\epsilon}$-time algorithm for set cover for any $\epsilon>0$~\cite{puatracscu2010possibility}.  It is quite natural to ask~\cite{CyganFHW17}:

Is there any  $o(\ln n)$-approximation algorithm for  the parameterized set cover problem (or dominating set problem)  with running time  $n^{k-\epsilon}$?

Exponential time approximation algorithms for the unparameterised  version of set cover problem were studied in \cite{bourgeois2008efficient,cygan2009exponential}. It was shown that for any ratio $r$, there is a $(1+\ln r)$-approximation algorithm for \textsc{SetCover} with running time  $2^{n/r}n^{O(1)}$. 
No $n^{k-\epsilon}$ time algorithm  for \textsc{SetCover} achieving an approximation ratio in $o(\ln n)$ is known in literature.
On the other hand, proving inapproximability for  a parameterized  problem is not an easy task. In fact, even the  constant FPT-approximability, i.e., the existence of  $f(k)\cdot n^{O(1)}$-time algorithm for any computable function $f$ (henceforth referred to as FPT-algorithm) with constant approximation, has been open for many years~\cite{marx08}. Lacking  techniques like PCP-theorem~\cite{arora1998proof}, many results on the parameterized inapproximability of set cover problem had to use  strong conjectures~\cite{bonesc13,chalermsook2017gap}  to create a gap in the first place. 
It is of great interest to develop techniques to prove hardness of approximation for parameterized problems  only using  hypothesis such as  $SETH$, $ETH$  or even weaker assumptions like $\W 1\neq FPT$ or $\W 2\neq FPT$~\cite{dowfel99,flugro06}  from the parameterized complexity theory. The success of this quest might extend the
arsenal of methods for proving hardness of approximation and  lead  to  PCP-like theorems for Fine-Grained Complexity~\cite{abboud2017distributed}.

The first constant FPT-inapproximability result for parameterized \textsc{SetCover} based on $\W 1\neq FPT$ was given by \cite{chen2016constant} using the one-sided gap of \textsc{Biclique} from~\cite{lin15}. In fact, \cite{chen2016constant} deals with dominating set problem, which is essentially the same as \textsc{SetCover}. Recently, Karthik, Laekhanukit and Manurangsi~\cite{karthik2017parameterized} significantly improved the FPT-inapproximation factor to  $(\log n)^{1/{k^{O(1)}}}$ under the hypothesis $\W 1\neq FPT$. They also rule out the existence of $(\log n)^{1/{k^{O(1)}}}$-approximation algorithm with running time $f(k)\cdot n^{o(k)}$ for any computable function $f$,  assuming ETH, and the existence of $(\log n)^{\frac{1}{{(k+e(\epsilon))^{O(1)}}}}$-approximation algorithms  with running time $f(k)\cdot n^{k-\epsilon}$, assuming SETH. 
Their approach  is to first establish a $(\log n)^{\frac{1}{\Omega(k)}}$ gap for \textsc{MaxCover}, then reduce \textsc{MaxCover} to \textsc{SetCover} and  obtain a $(\log n)^{\frac{1}{\Omega(k^{2})}}$-gap.
This paper presents a new technique which allows us to design simple reductions improving the inapproximation factor to $(1-\epsilon)\cdot\sqrt[k]{\frac{\log n}{\log\log n}}$. The reduction in~\cite{chalermsook2017gap} can get the ratio $(\log n)^{\Omega(1/k)}$ but it has to assume Gap-ETH.

\begin{theorem}\label{thm:SETHbased}
Assuming SETH, for every $\epsilon,\delta\in(0,1)$, sufficiently large $k$\footnote{We need large $k$ to get the $\frac{1}{1+\delta}\left(\frac{\log N}{\log\log N}\right)^{\frac{1}{k}}$ gap for small $\delta$. If we want to obtain an $\Theta\left(\sqrt[k]{\frac{\log N}{\log\log N}}\right)$ gap, then our reduction works for all $k\ge 2$.} and computable function $f : \mathbb{N}\to\mathbb{N}$, there is no $f(k)\cdot N^{k-\epsilon}$ time algorithm that can, given an $N$-vertex set cover instance $I$, distinguish between
\begin{itemize}
\item $opt(I)\le k$,
\item $opt(I)>\frac{1}{1+\delta}\left(\frac{\log N}{\log\log N}\right)^{\frac{1}{k}}$.
\end{itemize}
\end{theorem}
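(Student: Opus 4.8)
The plan is to prove Theorem~\ref{thm:SETHbased} by composing the classical SETH-based lower bound for \emph{exact} parameterized \textsc{SetCover} with the gap-producing reduction of the abstract, and then checking that the blow-up incurred along the way is mild enough to still refute SETH. Fix $\epsilon,\delta\in(0,1)$ and a sufficiently large $k$, and suppose for contradiction that there is an $f(k)\cdot N^{k-\epsilon}$-time algorithm $A$ that, on an $N$-vertex \textsc{SetCover} instance $I'$, distinguishes $opt(I')\le k$ from $opt(I')>\frac{1}{1+\delta}(\log N/\log\log N)^{1/k}$. I will turn $A$ into a $2^{(1-\eta)n}$-time algorithm for $d$-SAT, for a suitable $\eta>0$ depending on $\epsilon$ and $k$, with $d=d(\eta)$ the constant supplied by SETH (fixed at the outset, since the auxiliary constants of the reduction depend on $d$). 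The starting point is the \emph{asymmetric} SAT-to-\textsc{SetCover} reduction: given a $d$-SAT instance on $n$ variables, the sparsification lemma yields, in $2^{\eta n/2}\mathrm{poly}(n)$ time, at most $2^{\eta n/2}$ sub-instances, each with $O(n)$ clauses, whose disjunction is equisatisfiable; for one such sparse formula $\varphi$, partitioning its $n$ variables into $k$ blocks of $n/k$ variables and taking one set per partial assignment of each block produces, in $2^{n/k}\mathrm{poly}(n)$ time, a \textsc{SetCover} instance $I=(S,U,E)$ with $|S|=k\,2^{n/k}$, a universe of size $|U|=O(n)$ --- so that $|U|=\Theta(\log|S|)$ for fixed $k$ --- and $opt(I)\le k$ iff $\varphi$ is satisfiable (indeed $opt(I)=k$ when it is, and $opt(I)>k$ when it is not).

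Next I would feed $I$ to the reduction of the abstract, taking $h$ as large as it allows, $h:=\lfloor(1-o(1))(\log|S|/\log\log|S|)^{1/k}\rfloor$; once $|S|$ is large this exceeds $k$ and is admissible, and the only role of the hypothesis ``$k$ sufficiently large'' is that the slack hidden in the $(1-o(1))$ contains a term vanishing only as $k\to\infty$, whose smallness is what lets us drive the attainable constant down to the sharp $\frac{1}{1+\delta}$ (for merely a $\Theta(1)$ constant any $k\ge 2$ would do). The output $I'=(S,U',E')$ keeps the same set family $S$, has $|U'|=|U|^{h^k}|S|^{O(1)}$ and hence $N:=|S|+|U'|$ vertices, and is produced in $|U|^{h^k}|S|^{O(1)}$ time. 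The crucial estimate is $h^k\log|U|=(1+o(1))\log|S|$: since $h^k\le\log|S|/\log\log|S|$ (near-maximal) and $\log|U|=\Theta(\log\log|S|)$ (polylogarithmic universe), the product lands on $\log|S|$, neither over- nor under-shooting; together with the $|S|^{O(1)}$ and $\mathrm{poly}(n)$ overheads contributing only lower-order terms to the exponent, this gives $\log N=(1+o(1))\log|S|=(1+o(1))\frac{n}{k}$, i.e.\ $N=2^{(1+o(1))n/k}$, and the reduction runs in $2^{O(n/k)}$ time.

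To conclude, by the guarantee of the gap reduction $opt(I')\le k$ when $\varphi$ is satisfiable, while $opt(I')\ge h$ when it is not (since then $I$ has no $k$-sized solution); and because $(\log N/\log\log N)^{1/k}=(1+o(1))(\log|S|/\log\log|S|)^{1/k}$ whereas $h=(1-o(1))(\log|S|/\log\log|S|)^{1/k}$, for $n$ large one checks $h>\frac{1}{1+\delta}(\log N/\log\log N)^{1/k}$. Hence running $A$ on $I'$ for each sparse sub-instance and taking the OR decides $d$-SAT on $n$ variables in total time $2^{\eta n/2}\cdot(2^{O(n/k)}+f(k)N^{k-\epsilon})=f(k)\cdot 2^{\eta n/2+(1+o(1))(1-\epsilon/k)n}$; since $1-\epsilon/k$ is bounded away from $1$ while the $o(1)$ vanishes, taking $\eta$ small enough (e.g.\ $\eta=\epsilon/(8k)$) makes this at most $2^{(1-\eta)n}$ for $n$ large, contradicting SETH.

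I expect the only genuinely delicate step to be the parameter bookkeeping of the second paragraph: $N$ must be kept down to $2^{(1+o(1))n/k}$, because a constant-factor overshoot in $\log N$ would make $N^{k-\epsilon}$ grow like $2^{cn}$ with $c>1$ for large $k$ and destroy the SETH contradiction. This is exactly what dictates (i) feeding the gap reduction a \textsc{SetCover} instance whose universe is polylogarithmic in the number of sets; (ii) using the sparsification lemma so that $|U|=O(n)$ rather than $n^{\Theta(d)}$; and (iii) choosing $h$ essentially maximal, so that $h^k\log|U|$ hits $\log|S|$ up to a $(1+o(1))$ factor rather than a constant multiple of it. Everything else --- the floor in the definition of $h$, checking that the extra factors in $|U'|$ beyond $|U|^{h^k}$ are only sub-polynomial in $|S|$ in this regime, absorbing the $\mathrm{poly}(n)$ overheads, and converting the $k$-dependent slack into the sharp $\frac{1}{1+\delta}$ by taking $k$ large --- is routine.
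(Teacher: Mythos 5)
There is a genuine gap in the parameter bookkeeping of the second paragraph, and it is fatal: your own choices force $\log N=(2+o(1))\log|S|$, not $(1+o(1))\log|S|$. The prefactor $|S|^{O(1)}$ in $|U'|=|U|^{h^k}|S|^{O(1)}$ is not sub-polynomial in $|S|$ --- the gap gadget has $m=|S|\log h\ge|S|$ parts (the universal set has size roughly $|S|\log h$), so Lemma~\ref{lem:reduction} gives $|U'|=m\,|U|^{h^k}\ge|S|\cdot|U|^{h^k}$. This contributes a full $\log|S|$ to $\log|U'|$, on top of the $h^k\log|U|=(1+o(1))\log|S|$ that you deliberately arranged by taking $h$ near-maximal. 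So $N\approx|S|^2$, and then
\[
N^{k-\epsilon}\approx 2^{2(n/k)(k-\epsilon)}=2^{2n(1-\epsilon/k)}\ge 2^{n}
\]
for every $k\ge 2$ and $\epsilon\in(0,1)$, which does not refute SETH no matter how small $\eta$ is chosen. Your own warning that ``a constant-factor overshoot in $\log N$ would destroy the SETH contradiction'' is exactly what happens, and it is caused by item~(iii) of your closing summary: choosing $h$ essentially maximal is the wrong regime.

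The paper avoids this by taking $h$ a constant factor smaller than maximal: $h=\frac{1}{1+\delta/2}\sqrt[k]{\log|S|/\log\log|S|}$, so that $\ell=h^k=\frac{\log|S|}{(1+\delta/2)^k\log\log|S|}$ and hence $|U|^{h^k}=|S|^{O(1/(1+\delta/2)^k)}$. Under the side condition $(1+\delta/2)^k\ge 2k^4$ this is $|S|^{O(1/k^4)}$, negligible compared with the unavoidable $|S|$ prefactor, giving $N\le|S|^{1+1/k^3}$. The second side condition $(1+1/k^3)^{1/k}\le(1+\delta)/(1+\delta/2)$ ensures the inflation from $|S|$ to $N$ costs only the slack between $\delta/2$ and $\delta$ in the gap. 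Both side conditions are why ``$k$ sufficiently large'' is needed --- not, as you suggest, merely to sharpen a $\Theta(1)$ constant to $\frac{1}{1+\delta}$. With this corrected choice of $h$ one gets $N^{k-\epsilon}\le 2^{n(1+1/k^3)(1-\epsilon/k)}\le 2^{n(1-\epsilon')}$ for $\epsilon'=\epsilon/k-1/k^2>0$ once $k>1/\epsilon$, and the SETH contradiction goes through. The rest of your outline (variable partitioning into $k$ blocks to get $|U|=O(n)$, $|S|=k2^{n/k}$, sparsification, running $A$ on each sparse branch) matches the paper's route.
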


\begin{theorem}\label{thm:ETHbased}
Assuming ETH, there is a constant $\epsilon\in(0,1)$ such that for every $\delta\in(0,1)$ and computable function $f : \mathbb{N}\to\mathbb{N}$, no $f(k)\cdot N^{\epsilon k}$ time algorithm that can, given an $N$-vertex set cover instance $I$ ,  distinguish between
\begin{itemize}
\item $opt(I)\le k$,
\item $opt(I)>\frac{1}{1+\delta}\cdot \left(\frac{\log N}{\log\log N}\right)^{\frac{1}{k}}$.
\end{itemize}
\end{theorem}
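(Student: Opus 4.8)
I would prove Theorem~\ref{thm:ETHbased} in the same way as Theorem~\ref{thm:SETHbased}: compose the gap-producing reduction from the abstract with a lower bound for \emph{exact} parameterized set cover. The only change is that the ETH-based lower bound one plugs in is quantitatively weaker (``$2^{\epsilon_0 n}$ for some unspecified constant $\epsilon_0$'' rather than ``$2^{(1-\epsilon)n}$''), which is exactly why the exponent degrades from $k-\epsilon$ to a small fixed $\epsilon k$; the bulk of the work is bookkeeping of constants.

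\textbf{Starting hardness.} Given a $3$-SAT instance $\varphi$ on $n$ variables (with $\mathrm{poly}(n)$ clauses), partition the variables into $k$ blocks of size $n/k$, make one set per partial assignment of each block (so $|S|=k\cdot 2^{n/k}=:m$), and add one universe element per clause together with one forcing element per block (so $|U|=\mathrm{poly}(n)$, with the natural incidences); then $\varphi$ is satisfiable if and only if the resulting instance $I$ has $opt(I)\le k$. ETH implies that $3$-SAT cannot be solved in time $2^{\epsilon_0 n}$ for some absolute constant $\epsilon_0\in(0,1)$, by definition of the ETH exponent. Since here $n=\Theta(k\log m)$, an $f(k)\cdot m^{\epsilon_1 k}$-time algorithm for this set cover problem would decide $\varphi$ in time $f(k)\cdot k^{\epsilon_1 k}\,2^{\epsilon_1 n}+2^{O(n/k)}$, which for any fixed $k$ large enough (in terms of $\epsilon_1$) is $O_k(2^{\epsilon_1 n})<2^{\epsilon_0 n}$ whenever $\epsilon_1<\epsilon_0$. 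Hence there is an absolute constant $\epsilon_1\in(0,\epsilon_0)$ such that, for all sufficiently large $k$ and all computable $f$, no $f(k)\cdot m^{\epsilon_1 k}$-time algorithm distinguishes $opt(I)\le k$ from $opt(I)\ge k+1$ on such instances. Using the block partition is what guarantees $\log|U|=O(\log\log m)$, i.e.\ the universe is polylogarithmic in the set family; this will be essential below.

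\textbf{Gap amplification.} Feed $I$ into the reduction from the abstract with parameter $k$ and with $h$ the largest integer satisfying $h^{k}\le (1+\delta/3)^{-k}\cdot\frac{\log m}{\log\log m}$; this respects $k<h\le(1-o(1))\sqrt[k]{\log m/\log\log m}$ once $n$ is large. The output $I'=(S,U',E')$ has $N:=|V(I')|=|U|^{h^{k}}\cdot m^{O(1)}$, and since $\log|U|=O(\log\log m)$ while $h^{k}=O(\log m/\log\log m)$, the factor $|U|^{h^{k}}$ is only $m^{O(1)}$; thus $N\le m^{c}$ for a constant $c$ depending only on the reduction. By the two bullet points, $opt(I)\le k$ gives $opt(I')\le k$, while $opt(I)\ge k+1$ gives $opt(I')\ge h$. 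Substituting $\log m\ge\frac1c\log N$ and $\log\log m\le\log\log N$ into the definition of $h$ yields $h\ge(1+\delta/3)^{-1}\,c^{-1/k}\,(1-o(1))\left(\frac{\log N}{\log\log N}\right)^{1/k}$, which for all sufficiently large $k$ (given $\delta$ and $c$) exceeds $\frac{1}{1+\delta}\left(\frac{\log N}{\log\log N}\right)^{1/k}$. So $I'$ is a legitimate instance of the promise problem in the theorem.

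\textbf{Closing the loop.} Put $\epsilon:=\epsilon_1/(2c)$, an absolute constant in $(0,1)$. If some $f(k)\cdot N^{\epsilon k}$-time algorithm decided that promise problem, then running it on $I'$ (which is produced in time $m^{O(1)}$) would decide $\varphi$ in time $f(k)\cdot m^{c\epsilon k}+m^{O(1)}=f(k)\cdot m^{\epsilon_1 k/2}+2^{O(n/k)}$, which for any fixed sufficiently large $k$ is $O_k(2^{\epsilon_1 n/2})<2^{\epsilon_0 n}$, contradicting ETH. I expect the main obstacle to be precisely this bookkeeping: one needs the starting universe polylogarithmic in $m$ so that the $|U|^{h^{k}}$ blow-up keeps $N$ polynomial in $m$, and one needs $\epsilon$ taken small relative to the product of the (unspecified) ETH exponent and the reduction's polynomial blow-up exponent, so that the composed running time really dips below the $2^{\epsilon_0 n}$ barrier; the hypothesis that $k$ is large is used only for the $\frac{1}{1+\delta}$ slack and to absorb the $2^{O(n/k)}$ overhead.
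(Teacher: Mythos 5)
Your proposal is correct and follows essentially the same route as the paper: reduce sparse $3$-SAT to set cover with a polylogarithmic universe by partitioning the variables into $k$ blocks, then apply the gap-gadget-based amplification and observe that the composed instance has size polynomial in $|S|$, so that a weak $N^{\epsilon k}$ algorithm would decide $3$-SAT in $2^{o(n)}$ time. The paper packages these two steps into Lemma~\ref{lem:sat2setcover} with tighter explicit constants (obtaining $N\le M^{1+1/k^3}$ and hence $\epsilon=\epsilon_{\mathrm{ETH}}/2$) whereas you use the looser bound $N\le m^{c}$ and pay for it in the final exponent, but the logic is the same.
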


Behind these results is a reduction which, given an integer $k$, an $n$-vertex set cover instance $I$  and an integer $h\le O(\log n/\log\log n)$,  produces an $n^{O(1)}\cdot (|U(I)|)^{O(h^k)}$-vertex  instance $I'$ in $n^{O(1)}\cdot |U(I)|^{O(h^k)}$ time such that if $opt(I)\le k$ then $opt(I')\le k$, otherwise $opt(I')>h$. Therefore, to prove the $h$-factor parameterized inapproximability of \textsc{SetCover}, it suffices to show the hardness of \textsc{SetCover}  when the input instances have $n^{O(1/h^k)}$-size universe set. 
Note that the standard  reduction for SETH-hardness of set cover  parameterized by the solution size $k$ produces instances $I$ with $|U(I)|=O(k\log |S(I)|)$. With our reduction, this immediately yields the above theorems. Let us not fail to mention that the results of \cite{karthik2017parameterized} also imply the hardness of \textsc{SetCover} with logarithmic sized universe set assuming the $k$-SUM hypothesis and $\W 1\neq FPT$ hypothesis respectively.   
Similarly, we can obtain the corresponding inapproximability for set cover based on each of these hypotheses as well. In particular, using a simple trick,  we can even rule out $(\log N)^{1/\epsilon(k)}$-approximation FPT-algorithm of set cover for any unbounded computable function $\epsilon$ under $\W 1\neq FPT$.

\begin{theorem}\label{thm:kSumbased}
Assuming $k$-SUM hypothesis for any $\delta,\epsilon\in (0,1)$, sufficiently large $k$  and computable function $f : \mathbb{N}\to\mathbb{N}$,  there is no $f(k)\cdot N^{\lceil k/2\rceil-\epsilon}$ time algorithm that can, given an $N$-vertex set cover instance $I$,  distinguish between
\begin{itemize}
\item $opt(I)\le k$,
\item $opt(I)>\frac{1}{1+\delta}\left(\frac{\log N}{\log\log N}\right)^{\frac{1}{k}}$.
\end{itemize}
\end{theorem}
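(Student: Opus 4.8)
The plan is to combine the $k$-SUM‑based hardness of \textsc{SetCover} on instances with logarithmic‑sized universe — which, as noted in the introduction, follows from the results of~\cite{karthik2017parameterized} — with the gap‑producing reduction described above. The base hardness I start from reads: assuming the $k$-SUM hypothesis, for every $\epsilon\in(0,1)$ and computable $f$, there is no $f(k)\cdot m^{\lceil k/2\rceil-\epsilon}$‑time algorithm that, given an $m$‑vertex \textsc{SetCover} instance $I_0$ with $|U(I_0)|=O(\log|S(I_0)|)$, distinguishes $opt(I_0)\le k$ from $opt(I_0)>k$; the exponent $\lceil k/2\rceil$ is exactly the one furnished by the $k$-SUM hypothesis through the (parameter‑preserving, essentially size‑preserving) reduction of~\cite{karthik2017parameterized}. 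Feeding such an $I_0$ into our reduction will produce the claimed gap.

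First I would fix $\delta,\epsilon\in(0,1)$ and $k$ sufficiently large, and — for arbitrarily large $I_0$, which is legitimate since the $k$-SUM lower bound is asymptotic — set
\[ h \;=\; \left\lfloor \frac{1}{1+\delta/2}\left(\frac{\log|S(I_0)|}{\log\log|S(I_0)|}\right)^{1/k}\right\rfloor . \]
For $|S(I_0)|$ large one has $k<h\le(1-o(1))\sqrt[k]{\log|S(I_0)|/\log\log|S(I_0)|}$, so the reduction applies: in time $|U(I_0)|^{O(h^k)}\cdot|S(I_0)|^{O(1)}$ it outputs $I'=(S(I_0),U',E')$ with $|U'|=|U(I_0)|^{O(h^k)}\cdot|S(I_0)|^{O(1)}$ and $opt(I_0)\le k\Rightarrow opt(I')\le k$, $opt(I_0)>k\Rightarrow opt(I')>h$. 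The decisive bookkeeping is that, since $k$ is large and $\delta$ fixed, $h^k$ falls short of $\log|S(I_0)|/\log\log|S(I_0)|$ by the diverging factor $(1+\delta/2)^k$, so
\[ h^k\cdot\log|U(I_0)| \;=\; O\!\left(h^k\log\log|S(I_0)|\right) \;=\; O\!\left(\frac{\log|S(I_0)|}{(1+\delta/2)^k}\right) \;=\; o\!\left(\log|S(I_0)|\right), \]
whence $|U(I_0)|^{O(h^k)}=|S(I_0)|^{o(1)}$; controlling the remaining polynomial factor in the same near‑linear way as in the proofs of Theorems~\ref{thm:SETHbased} and~\ref{thm:ETHbased}, the reduction produces an instance on $N:=|S(I')|+|U'|=|S(I_0)|^{1+o(1)}$ vertices in time $|S(I_0)|^{1+o(1)}$. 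Hence $\log N=(1+o(1))\log|S(I_0)|$, so $\big(\log N/\log\log N\big)^{1/k}=(1+o(1))\big(\log|S(I_0)|/\log\log|S(I_0)|\big)^{1/k}$, and therefore $h\ge\frac{1}{1+\delta}\big(\log N/\log\log N\big)^{1/k}$ once $|S(I_0)|$ is large enough.

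Finally, suppose for contradiction that some $f(k)\cdot N^{\lceil k/2\rceil-\epsilon}$‑time algorithm distinguishes $opt(I')\le k$ from $opt(I')>\frac{1}{1+\delta}(\log N/\log\log N)^{1/k}$. Running the reduction on $I_0$ and then this algorithm on $I'$ distinguishes $opt(I_0)\le k$ from $opt(I_0)>k$ — the gap guarantee turns ``$opt(I_0)>k$'' into ``$opt(I')>h\ge\frac{1}{1+\delta}(\log N/\log\log N)^{1/k}$'' — in total time $|S(I_0)|^{1+o(1)}+f(k)\cdot N^{\lceil k/2\rceil-\epsilon}=f(k)\cdot|S(I_0)|^{(1+o(1))(\lceil k/2\rceil-\epsilon)}\le f(k)\cdot m^{\lceil k/2\rceil-\epsilon/2}$ for $k$ and $I_0$ large, which contradicts the base hardness (instantiated with $\epsilon/2$ in place of $\epsilon$) and hence the $k$-SUM hypothesis. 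The whole argument is the $k$-SUM counterpart of the proofs of Theorems~\ref{thm:SETHbased} and~\ref{thm:ETHbased}; only the base hardness fed into the reduction changes. The step I expect to be the real obstacle is precisely the size accounting above: one must not let $h^k$ grow all the way up to $\log N/\log\log N$, since then the universe blow‑up $|U(I_0)|^{h^k}$ becomes a genuine polynomial in $|S(I_0)|$ of degree bounded away from $1$, inflating the exponent $\lceil k/2\rceil$ and destroying the contradiction; aiming instead for the $(1+\delta)$‑relaxed gap — which is why the statement needs ``sufficiently large $k$'' — makes this blow‑up sub‑polynomial and keeps the exponent intact.
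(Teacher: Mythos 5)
Your overall architecture matches the paper: establish that \textsc{SetCover} with a universe of size $O_k(\log|S|)$ is hard under the $k$-SUM hypothesis, then pump it through the gap gadget of Lemma~\ref{lem:constgapgadget} and Lemma~\ref{lem:reduction} with $h\approx\frac{1}{1+\delta/2}\sqrt[k]{\log|S|/\log\log|S|}$, and check that the blow-up $|U|^{h^k}$ stays sub-polynomial in $|S|$ because $(1+\delta/2)^k$ diverges. That accounting is exactly the paper's, and the resulting contradiction with the $k$-SUM lower bound is derived the same way.

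The real difference is that you treat the ``base hardness'' — no $f(k)\cdot m^{\lceil k/2\rceil-\epsilon}$-time algorithm for \textsc{SetCover} on instances with $|U|=O_k(\log|S|)$ under $k$-SUM — as a black box attributed to~\cite{karthik2017parameterized}, whereas the paper does not rely on that citation at all: it rebuilds the base instances from scratch by composing Lemma~\ref{lem:ksum2kvectorsum} (from~\cite{abboud2014losing}), which maps $k$-SUM to a small disjunction of $k$-\textsc{Vector-Sum} instances over coordinates in $[-f(k),f(k)]^{g(k)\log n}$, with the paper's own Lemma~\ref{lem:ksum2setcover}, a new hypercube-style reduction that turns such a vector-sum instance into a \textsc{SetCover} instance with $|U|\le k^{(2f(k))^{k-1}}g(k)\log n$. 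That second lemma is genuine content that your proposal leaves unproved. A related consequence you should be explicit about: the base reduction does not produce one \textsc{SetCover} instance but a disjunction of $s=n^{\Theta(\epsilon)}$ instances (one per mapping $f_i$), so the final algorithm must be run on all $s$ of them; this is harmless for the running-time bound since $s$ is sub-polynomial, but your single-instance black-box statement glosses over it. If you state the base hardness as a self-contained lemma, either prove it (as the paper does) or phrase it at the level of the disjunction so that the citation is not doing more work than it can support.
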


\begin{theorem}\label{thm:w1version}
Assuming $\W 1\neq FPT$, for  and computable function $f : \mathbb{N}\to\mathbb{N}$ and unbounded computable function $\epsilon : \mathbb{N}\to\mathbb{N}$, there is no $f(k)\cdot N^{O(1)}$-time algorithm that can, given an $N$-vertex set cover instance $I$,   distinguish between
\begin{itemize}
\item $opt(I)\le k$,
\item $opt(I)>{\log N}^{1/\epsilon(k)}$.
\end{itemize}
\end{theorem}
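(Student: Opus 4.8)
The plan is to compose the gap-producing reduction of this paper with the $\W 1$-hardness of \textsc{SetCover} on instances whose universe is only logarithmically large, which \cite{karthik2017parameterized} derives from $\W 1\neq\FPT$: there is no $f(k)\cdot s^{O(1)}$-time algorithm that, given $I$ with $|S(I)|=s$ and $|U(I)|\le O(k\log s)$, decides whether $opt(I)\le k$. Suppose towards a contradiction that for some computable $f$ and unbounded computable $\epsilon$ there were an algorithm $A$ running in time $f(k)\cdot N^{O(1)}$ that, on an $N$-vertex set cover instance, distinguishes $opt\le k$ from $opt>(\log N)^{1/\epsilon(k)}$; I will turn $A$ into an fpt-algorithm for the log-universe problem above.

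The one genuinely delicate point — because $\epsilon$ may grow arbitrarily slowly, possibly much slower than $k$ itself — is a reparametrisation of the target. Given $k$, set $k'':=\min\{m\ge k:\epsilon(m)\ge 2(k+1)\}$, which is well defined since $\epsilon$ is unbounded and computable since $\epsilon$ is computable. Now let $I=(S,U,E)$ with $|S|=s$ and $|U|\le O(k\log s)$ be an instance of the log-universe problem with parameter $k$, and fix a threshold $s_0(k)$ large enough for the estimates below. If $s\le s_0(k)$, decide $opt(I)\le k$ by brute force in $s^{k+1}\le f'(k)$ steps; otherwise put $h:=\lceil(\log s)^{1/(k+1)}\rceil$. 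For $s>s_0(k)$ we have $k<h$ and $h\le(1-o(1))\sqrt[k]{\log s/\log\log s}$ (since $\log\log s=o((\log s)^{1/(k+1)})$), so the gap-producing reduction applies to $(I,k,h)$ and outputs $I'=(S,U',E')$ with $|U'|=|U|^{h^k}\cdot|S|^{O(1)}$. Because $|U|^{h^k}\le(O(k\log s))^{2^k(\log s)^{k/(k+1)}}=2^{o(\log s)}\le s$, the instance $I'$ has $N:=|S|+|U'|=s^{O(1)}$ vertices and is produced in time polynomial in $s$ for fixed $k$.

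It remains to transport the dichotomy into the scale of $\log N$. If $opt(I)\le k$ then $opt(I')\le k\le k''$. If $opt(I)>k$ then $opt(I')>h\ge(\log s)^{1/(k+1)}$; using $\log N\le c\log s$ for an absolute constant $c$ and $\epsilon(k'')\ge 2(k+1)$, for $s_0(k)$ large enough the right-hand side is at least $(c\log s)^{1/(2(k+1))}\ge(\log N)^{1/\epsilon(k'')}$, hence $opt(I')>(\log N)^{1/\epsilon(k'')}$. Thus $I'$ always satisfies the promise of the target problem with parameter $k''$, and running $A$ on $(I',k'')$ outputs ``$opt\le k''$'' precisely when $opt(I)\le k$. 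Since $A$ then runs in time $f(k'')\cdot N^{O(1)}=f''(k)\cdot s^{O(1)}$ and $k\mapsto k''$ is computable, this is an fpt-algorithm for log-universe \textsc{SetCover}, contradicting $\W 1\neq\FPT$.

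I expect the main obstacle to be exactly the choice of $k''(k)$: it must be forced to satisfy $k''\ge k$ so that YES-instances of the source map to genuine YES-instances of the target, while at the same time $\epsilon(k'')$ has to be large enough — here $\ge 2(k+1)$ — to absorb both the polynomial overhead hidden in $\log N=O(\log s)$ and the $(k+1)$-st root introduced by $h$. Once $k''$ is pinned down, the remainder is routine checking of the side conditions of the reduction, with all ``$s$ sufficiently large'' clauses discharged by the brute-force case $s\le s_0(k)$.
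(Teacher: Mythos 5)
Your proof is correct and follows essentially the same route as the paper: start from a $\W 1$-hard \textsc{SetCover} family with $|U|=O(k\log|S|)$, feed it through the gap-producing reduction (Lemma~\ref{lem:reduction} with the gap-gadget of Lemma~\ref{lem:constgapgadget}), and reparametrize to a computable $k''\ge k$ large enough that $\epsilon(k'')$ dominates the exponent. The paper instantiates the log-universe source itself via an explicit Clique-to-SetCover reduction (Lemma~\ref{lem:clique2setcover}), yielding parameter $\binom{k}{2}$ and a function $g$ with $g(k)>\binom{k}{2}$ and $\epsilon(g(k))>\binom{k}{2}$, whereas you cite the log-universe hardness from \cite{karthik2017parameterized} as a black box and use $\epsilon(k'')\ge 2(k+1)$ together with an extra $(k+1)$-st root in $h$ to absorb the polynomial overhead; this is a cosmetic difference, since the paper itself credits the idea for its Lemma~\ref{lem:clique2setcover} to that work. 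One point you gloss over: Lemma~\ref{lem:reduction} requires the source instance to be given with $S$ partitioned into $k$ groups and completeness witnessed by a \emph{rainbow} $k$-tuple $(s_1,\ldots,s_k)$, one per group; your black-box statement is for plain \textsc{SetCover}, where $opt(I)\le k$ does not automatically give such a tuple. The paper handles this explicitly in Lemma~\ref{lem:clique2setcover} (item (i) produces exactly one edge per color class $S_{\{i,j\}}$). You would need either to observe that the cited hardness comes with this colored structure, or to interpose the standard $k$-copy trick; without one of these the completeness step of Lemma~\ref{lem:reduction} does not apply as stated. That fixed, the remaining estimates you give are accurate.
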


\subparagraph*{Technique contribution.} The main technique contribution of this paper is to introduce a  gadget that can be used to design gap-producing reductions from the set cover problem to its approximation version and provide a construction of this gadget using \emph{$(n,k)$-universal sets}. Compared to the reductions in~\cite{karthik2017parameterized}, the gap amplification step in this paper is independent of the starting assumptions. This simplifies the proof for showing the inapproximability of the set cover problem. In particular, the inapproximability result in~\cite{karthik2017parameterized} assuming SETH needs some heavy machinery like AG codes to create the gap, while our reduction is completely elementary. 

In addition to it simplicity, an important feature of our reduction is that it can  be computed by constant depth circuits. Combining this observation with Rossman's $\Omega(n^{k/4})$ size lower bound for constant depth circuits detecting $k$-clique~\cite{rossman2008constant},  Wenxin Lai~\cite{Lai19} showed that there is no constant-depth circuits of size $f(k)n^{o(\sqrt{k})}$ that can distinguish between a set cover instance with solution size at most $k$ and one whose minimum solution size is at least $({\log n}/{\log\log n})^{1/\binom{k}{2}}$.

Another advantage of our reduction is that it can give hardness approximation result from assumptions that the distributed PCP technique cannot. If we assume that $k$-set-cover  with large universe set, say $|U|=n^{1/h(k)^k}$, has no $n^{k-\epsilon}$-time algorithm, then our reduction gives $h(k)$ factor hardness of approximation $k$-set-cover in $n^{k-\epsilon}$ time. This cannot be achieved by the distributed PCP technique used in~\cite{karthik2017parameterized} due to known lower bounds in communication complexity of set disjointness.

The gap-gadget we introduce in this paper is similar to the bipartite graphs with threshold property in~\cite{lin15,Lin18}. Such kind of gadgets may have further applications in proving hardness of approximation for other parameterized problems.

\section{Preliminaries}
For $n,k\in\mathbb{N}$, an $(n,k)$-universal set is a set of binary strings with length $n$, such that the restriction to  any $k$ indices contains all the $2^k$ possible binary configurations. 
\begin{lemma}\label{lem:universalset}[See Sections 10.5 and 10.6 of \cite{jukna2011extremal}]
For $k2^k\le\sqrt{n}$,  $(n,k)$-universal sets of size $n$ can be computed in $O(n^{3})$ time.
\end{lemma}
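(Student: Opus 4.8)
The plan is to give an \emph{explicit} family and verify three things: it has $O(n)$ strings, it is printable in $O(n^{3})$ time, and — the only substantial point — its restriction to any $k$ coordinates is onto $\{0,1\}^{k}$. For orientation on the size, observe that a uniformly random family of $m$ strings in $\{0,1\}^{n}$ fails to be $(n,k)$-universal with probability at most $\binom{n}{k}2^{k}(1-2^{-k})^{m}\le(2n)^{k}e^{-m/2^{k}}$, which is below $1$ once $m\ge 2^{k}k\ln(2n)+1$; under the hypothesis $k2^{k}\le\sqrt n$ this bound is $O(\sqrt n\log n)=o(n)$, so the claimed size is generous. The content of the lemma is therefore to replace this probabilistic argument by an efficient deterministic construction — the method of conditional expectations would cost $\binom{n}{k}2^{k}\cdot\mathrm{poly}(n)=n^{\Theta(k)}$ time, which is super-polynomial for unbounded $k$, so a genuinely explicit gadget is needed.

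I would use a Paley-type construction. Pick a prime $p$ with $2n<p<4n$ (Bertrand's postulate), let $\chi$ be the Legendre symbol modulo $p$, and for each $a\in\mathbb{Z}_{p}$ define $s_{a}\in\{0,1\}^{p}$ by $(s_{a})_{i}=1$ iff $a+i$ is a quadratic non-residue modulo $p$, setting $(s_{a})_{i}:=0$ when $a+i\equiv0$. The candidate is $\{s_{a}:a\in\mathbb{Z}_{p}\}$; restricting to any $n$ of the $p$ coordinates gives a set of $p=O(n)$ strings of length $n$ (the constant being immaterial for all applications of the lemma). Producing it requires finding $p$ (polynomial time), tabulating $\chi$ on $\mathbb{Z}_{p}$ by Euler's criterion or a quadratic-residue sieve ($\widetilde{O}(n)$ time), and printing the $p$ strings ($O(n^{2})$ time), all comfortably within $O(n^{3})$.

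The heart of the matter is surjectivity. Fix distinct coordinates $i_{1},\dots,i_{k}$ and a target pattern, encoded as signs $\epsilon_{1},\dots,\epsilon_{k}\in\{\pm1\}$; I must find $a$ with $\chi(a+i_{j})=\epsilon_{j}$ for all $j$. The number $N$ of such $a$ satisfies $N\ge\sum_{a\in\mathbb{Z}_{p}}\prod_{j=1}^{k}\tfrac{1+\epsilon_{j}\chi(a+i_{j})}{2}-k$, the additive $k$ absorbing the $k$ values $a=-i_{j}$ at which a factor vanishes. Expanding the product over subsets $J\subseteq[k]$ and using multiplicativity of $\chi$, the sum equals $2^{-k}\sum_{J}\big(\prod_{j\in J}\epsilon_{j}\big)\sum_{a\in\mathbb{Z}_{p}}\chi\big(\prod_{j\in J}(a+i_{j})\big)$: the $J=\varnothing$ term contributes $p/2^{k}$, and for $J\ne\varnothing$ the polynomial $\prod_{j\in J}(X+i_{j})$ is squarefree of positive degree, hence not a perfect square, so Weil's bound on character sums yields $\big|\sum_{a}\chi(\prod_{j\in J}(a+i_{j}))\big|\le(|J|-1)\sqrt p\le(k-1)\sqrt p$. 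Therefore $N\ge p/2^{k}-k\sqrt p-k$, which is positive as soon as $\sqrt p\gtrsim k2^{k}$ — and this holds because $k2^{k}\le\sqrt n<\sqrt{p/2}$, with the lower-order $-k$ harmless for $n$ large. Hence every pattern appears on every $k$-subset, i.e. the family is $(n,k)$-universal.

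The single hard ingredient is that last estimate: a soft union bound does not suffice, one genuinely needs $\sqrt p$-scale cancellation in the character sums — that is, Weil's theorem (a consequence of the Riemann hypothesis for curves over $\mathbb{F}_{p}$), or, to keep the argument elementary, a comparable bound for Legendre-symbol sums over products of distinct linear factors. It is exactly the balance between the error $\approx k2^{k}\sqrt p$ and the main term $p/2^{k}$ that forces the hypothesis into the shape $k2^{k}\le\sqrt n$. All the rest — prime selection, the handling of the $a=-i_{j}$ exceptions, and the size and $O(n^{3})$ accounting — is routine; this construction (or a close variant) is the one carried out in Sections~10.5--10.6 of~\cite{jukna2011extremal}.
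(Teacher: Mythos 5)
Your reconstruction is correct and it is the construction given in the cited Sections 10.5--10.6 of Jukna's book: a Paley-type quadratic-residue family over $\mathbb{F}_p$ for a prime $p=\Theta(n)$, with universality established through the Weil character-sum bound, and the threshold $k2^k\le\sqrt n$ falling out of balancing the main term $p/2^k$ against the $O(k\sqrt p)$ error, exactly as you wrote. The only cosmetic discrepancy is that your choice $p\in(2n,4n)$ yields $\Theta(n)$ rather than exactly $n$ strings, which (as you note) is immaterial to every use of the lemma in the paper.
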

\subparagraph*{Hypotheses.} Below is a list of hardness hypotheses we will use in this paper.
\begin{itemize}
\item $\W 1\neq FPT$: for any computable function $f: \mathbb{N}\to\mathbb{N}$,  no algorithm can, given an $n$-vertex graph $G$ and an integer $k$, decide if $G$ contains a $k$-clique in $f(k)\cdot n^{O(1)}$ time.
\item $\W 2\neq FPT$:  for any computable function $f: \mathbb{N}\to\mathbb{N}$, there is no algorithm which, given an $n$-vertex  set cover instance $I$ and an integer $k$, decides if $opt(I)\le k$ in $f(k)\cdot n^{O(1)}$ time.
\item Exponential Time Hypothesis (ETH)\cite{impagliazzo2001complexity,impagliazzo2001problems}: there exists a $\delta\in (0,1)$ such that $3$-SAT on $n$ variables cannot be solved in $O(2^{\delta n})$ time.
\item Strong Exponential Time Hypothesis (SETH)\cite{impagliazzo2001complexity,impagliazzo2001problems} for any $\epsilon\in (0,1)$ there exists $d\ge 3$ such that $d$-SAT on $n$ variables cannot be solved in $O(2^{(1-\epsilon)n})$  time.
\item $k$-SUM hypothesis ($k$-SUM) \cite{abboud2013exact}: for every  $k\ge 2$ and $\epsilon>0$, no $O(n^{\lceil k/2\rceil-\epsilon})$ time algorithm can, given  $k$ sets $S_1,\ldots,S_k$ each with $n$ integers in $[-n^{2k},n^{2k}]$, decide if there are $k$ integers $x_1\in S_1,\ldots,x_k\in S_k$ such that $\sum_{i\in[k]}x_i=0$.
\end{itemize}
We refer the reader to \cite{flugro06,dowfel99} for more information about the parameterized complexity hypotheses.
Using the Sparsification lemma~\cite{impagliazzo2001problems}, we can assume that the instances of $3$-SAT in ETH have $Cn$ clauses for some constant $C$ and the instances of $d$-SAT in SETH have $C_{d,\epsilon}n$ clauses where $C_{d,\epsilon}$ depends on $d$ and $\epsilon$.
\section{Reductions}
We start with the definition of $(k,n,m,\ell,h)$-gap-gadgets. In Lemma~\ref{lem:reduction}, we show how to use theses gadgets to create an $(h/k)$-gap for the set cover problem. Lemma~\ref{lem:constgapgadget} gives a polynomial time construction of gap-gadgets with  $h\le O({\log n/\log\log n})$ and $\ell=h^k$. Since for every input instance $I=(U,S,E)$ of set cover, our reduction runs in time $|S|^{O(1)}|U|^\ell$. If $|U|=\Omega(n)$, we can not afford such running time. Our next step is to prove the hardness of  set cover with $U=f(k)\cdot (\log n)^{O(1)}$ based on each of the aforementioned hypotheses.

\begin{definition}[$(k,n,m,\ell,h)$-Gap-Gadget]\label{def:gapgadget}
A $(k,n,m,\ell,h)$-Gap-Gadget is a bipartite graph $T=(A,B,E)$ satisfying the following conditions.
\begin{description}
\item[(G1)] $A$ is partitioned into $(A_1,A_2,\ldots,A_m)$. For every $i\in [m]$, $|A_i|=\ell$.
\item[(G2)] $B$ is partitioned into $(B_1,B_2,\ldots,B_k)$. For every $j\in[k]$, $|B_j|=n$.
\item[(G3)] For all $b_1\in B_1,b_2\in B_2,\ldots b_k\in B_k$, there exist $a_1\in A_1,\ldots,a_m\in A_m$ such that for all $i\in [m]$ and $j\in[k]$, $a_i$ is adjacent to $b_j$.
\item[(G4)] For all $X\subseteq B$ and $a_1\in A_1,\ldots,a_m\in A_m$, if every $a_i$ has at least $k+1$ neighbors in $X$, then $|X|>h$.
\end{description}
\end{definition}

To use this gadget, given a set cover instance $I=(S,U,E)$, we will identify the set $B$ with the set $S$. Then we  construct a new set cover instance $I'=(S',U',E')$ with $S'=S$ such that
\begin{itemize}
\item[($\star$)] for any subset $X$ of $S'$ that can cover $U'$, there must exist a vertex $a_i\in A_i$ for every $i\in m$  \emph{witnessing} that $X$ contains a solution of $I$, i.e., there exists $C\subseteq X$ that can cover $U$ in the instance $I$ and all the vertices of $C$ are adjacent to $a_i$ in the gap-gadget.
\end{itemize}
It is easy to check the correctness of this reduction:

If there is a $k$-vertex set $X$ that can cover $U$, then by (G3) we can pick $a_i\in A_i$ for all $i\in [m]$ such that $a_i$ is adjacent to all vertices in $X$. This means that $X$ is also a solution of $I'$. 

If $opt(I)>k$, then no matter how we pick $a_i\in A_i$, each $a_i$ must have $k+1$ neighbors in $X$. This implies that $X>h$ by (G4).

To achieve ($\star$), we will use the idea of \emph{hypercube set system} from Feige's work~\cite{feige1998threshold} (which is also used in~\cite{karthik2017parameterized,chalermsook2017gap}). For each $i\in [m]$, we construct a set $U^{A_i}$. Each element in $U^{A_i}$ can be regarded as a function $f : A_i\to U$. In the new set cover instance, $f$ is covered by $s\in S$ if there exists $a_i\in A_i$ such that $a_i$ is adjacent to $s$ in the gap-gadget and $f(a_i)$ is covered by $s$ in $I$. More details can be found in the proof of the following lemma.

\begin{lemma}\label{lem:reduction}
There is an algorithm which, given an integer $k$, an instance $I=(S,U,E)$ of \textsc{SetCover}, where $S=S_1\cup S_2\ldots \cup S_k$ and $|S_i|=n$ for all $i\in[k]$, and a $(k,n,m,\ell,h)$-Gap-Gadget, outputs a set cover instance $I'=(S',U',E')$ with $S'=S$ and $U'=m|U|^\ell$ in $|U|^\ell\cdot n^{O(1)}$ time such that
\begin{itemize}
\item if there exist $s_1\in S_1,\ldots,s_k\in S_k$ that can cover $U$, then $opt(I')\le k$;
\item if $opt(I)>k$, then $opt(I')> h$.
\end{itemize}
\end{lemma}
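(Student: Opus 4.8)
The plan is to make the "hypercube set system" idea explicit. For each $i\in[m]$ we introduce a block $U^{A_i}$ of new universe elements, where an element of $U^{A_i}$ is a function $f\colon A_i\to U$; thus $|U^{A_i}|=|U|^{\ell}$ since $|A_i|=\ell$, and $U' := \bigcup_{i\in[m]} U^{A_i}$ has size $m|U|^{\ell}$. The set family is unchanged: $S'=S$. The incidence relation $E'$ is defined exactly as flagged in the text: a vertex $s\in S$ covers $f\in U^{A_i}$ iff there is some $a\in A_i$ with $a$ adjacent to $s$ in the gap-gadget $T$ and $f(a)$ covered by $s$ in $I$. Constructing $I'$ takes $|U|^{\ell}n^{O(1)}$ time since $|U'|=m|U|^{\ell}\le n^{O(1)}|U|^{\ell}$ (note $m\le\ell^{O(1)}$-type bounds are not even needed: $m$ is polynomially bounded once the gadget is fixed, so I will just absorb it) and each edge test is polynomial.

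The completeness direction is short. Suppose $s_1\in S_1,\dots,s_k\in S_k$ cover $U$, and set $C=\{s_1,\dots,s_k\}$. By (G3), for every block $i$ we may pick $a_i^\star\in A_i$ adjacent (in $T$) to all of $s_1,\dots,s_k$; note here we use that $B=S$ is partitioned as $S_1,\dots,S_k$ with each $s_j$ lying in $B_j$. Now take any $f\in U^{A_i}$: the element $f(a_i^\star)\in U$ is covered by some $s_j\in C$; since $a_i^\star$ is adjacent to $s_j$ in $T$ and $f(a_i^\star)$ is covered by $s_j$ in $I$, the rule for $E'$ says $s_j$ covers $f$. Hence $C$ covers every block $U^{A_i}$, so $C$ is a solution of $I'$ of size $\le k$, giving $opt(I')\le k$.

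The soundness direction is the heart of the argument, and I expect it to be where the care is needed. Assume $opt(I)>k$; let $X\subseteq S'=S$ be any solution of $I'$. The key claim is that for every block $i\in[m]$ there exists a vertex $a_i\in A_i$ such that every $a\in A_i$ — wait, more precisely: I want to show that for each $i$ there is some $a_i\in A_i$ all of whose $T$-neighbours that lie in $X$ already form a set-cover of $U$ in $I$; call such an $a_i$ a \emph{witness}. Suppose for contradiction that block $i$ has no witness. Then for every $a\in A_i$, the set $N_T(a)\cap X$ fails to cover $U$ in $I$, so we may choose an uncovered element $u_a\in U$. Define $f\in U^{A_i}$ by $f(a)=u_a$. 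Then $f$ is an element of $U'$, so it is covered by some $s\in X$; by the rule for $E'$ there is some $a\in A_i$ with $a\sim_T s$ and $f(a)=u_a$ covered by $s$ in $I$ — but $s\in N_T(a)\cap X$, contradicting that $u_a$ is uncovered by $N_T(a)\cap X$. So every block $i$ has a witness $a_i\in A_i$. Now $N_T(a_i)\cap X$ covers $U$ in $I$, hence has size $>opt(I)>k$, i.e. $|N_T(a_i)\cap X|\ge k+1$: each $a_i$ has at least $k+1$ neighbours inside $X\subseteq B$. Applying (G4) with this $X$ and the tuple $(a_1,\dots,a_m)$ gives $|X|>h$. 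Since $X$ was an arbitrary solution of $I'$, $opt(I')>h$, completing the proof. The only subtlety to get exactly right is the bookkeeping between "$N_T(a)\cap X$ does not cover $U$" and the choice of the $u_a$'s, and the use of $opt(I)>k$ rather than $opt(I)\ge k$ to secure the strict bound needed by (G4).
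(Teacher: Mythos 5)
Your proof is correct and takes essentially the same route as the paper: the same hypercube-style universe $U'=\bigcup_i U^{A_i}$, the same edge rule, (G3) for completeness, and the same adversarial function $f(a)=u_a$ built from uncovered elements to show each block has a "good" $a_i$, followed by (G4). One tiny slip: a cover of $U$ has size $\ge opt(I)$, not $>opt(I)$, so "has size $>opt(I)>k$" should read "$\ge opt(I)>k$"; the conclusion $|N_T(a_i)\cap X|\ge k+1$ is unaffected.
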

\begin{proof}
Let $T=(A,B,E_T)$ be the $(k,n,m,\ell,h)$-Gap-Gadget. Without loss of generality, assume that for all $i\in[k]$ $B_i=S_i$. The new instance $I'=(S',U',E')$ is defined as follows.
\begin{itemize}
\item $S'=S$.
\item $U'= (\bigcup_{i\in[m]}U^{A_i})$.
\item For all $s\in S'$ and $f\in U^{A_i}$ where $i\in [m]$, $E'$ contains $\{s,f\}$ if  there exists an $a\in A_i$ such that 
\begin{description}
\item[(E'1)] $\{s,f(a)\}\in E$, 
\item[(E'2)] $\{a,s\}\in E_T$.
\end{description}
\end{itemize}

\subparagraph*{Completeness.} If $opt(I)\le k$, then there exist $s_1\in S_1,\ldots,s_k\in S$   that can cover the whole set $U$.  We will show that for every $f\in U'$,  $f$ is covered by some vertex in $\{s_1,s_2,\cdots,s_k\}$.
Firstly, by (G3), there exist $a_1\in A_1,\ldots,a_m\in A_m$ such that $a_is_j\in E_T$ for all $i\in[m]$ and $j\in[k]$. Assume that $f\in U^{A_i}$ for some $i\in[m]$. Observe that  $f(a_i)\in U$ must be covered by some $s_j$ with $j\in [k]$, i.e., $\{s_j,f(a_i)\}\in E$. Since $\{a_i,s_j\}\in E_T$ and $\{s_j,f(a_i)\}\in E$, according to the definition of $E'$, we must have  $\{s_j,f\}\in E'$.

\subparagraph*{Soundness.} Suppose $opt(I)>k$. Let $X\subseteq S'$ be a set covering $U'$. For every $a\in A$, let $N^T(a)$ be the set of neighbors of $a$ in $T$. We have the following claim.


\begin{claim}\label{claim:Xsize}
For every $i\in[m]$ there exists $a_i\in A_i$ such that $|N^T(a_i)\cap X|\ge k+1$.
\end{claim}


\noindent\textit{Proof of Claim~\ref{claim:Xsize}.} Suppose there exists an $i\in[m]$ such that for all $a\in A_i$, $|N^T(a)\cap X|\le k$. 
Since $opt(I)>k$, every solution of $I$ has size at least $k+1$. It follows that for every $a\in A_i$, there exists some $u_a\in U$ such that $u_a$ is not covered by $N^T(a)\cap X$ in the set cover instance $I$.  Define a function $f\in U^{A_i}$ such that $f(a)=u_a$ for every $a\in A_i$. We claim that $f$ is not covered by $X$. Otherwise, suppose  there exists an  $s\in X$ that can cover $f$. According to  the definition of $E'$, there must exists an $a\in A_i$ such that   (E'1) and (E'2) hold. However, if $s\in N^T(a)\cap X$, then $\{s,f(a)\}=\{s,u_a\}\notin E$. On the other hand, if $s\notin N^T(a)\cap X$, then $\{a,s\}\notin E_T$.  In both cases, we obtain   contradictions. \flushright$\dashv$

By Claim~\ref{claim:Xsize}, we can pick $a_i\in A_i$ for each $i\in[m]$ such that every $a_i$ has at least $k+1$ neighbors in $X$. By the property of Gap-Gadget, $|X|>h$.
\end{proof}

\begin{remark}
Recall that the greedy algorithm can approximate the set cover problem within a $(1+\ln |U|)$-approximation ratio. If one could construct a gap-gadget for parameters satisfying 
\[
k(1+\ln|U'|)=  k(1+\ell\ln |U|+\ln m)<h,
\]
then applying the greedy algorithm on input $I'$ could decide whether $opt(I)=k$ in $|U|^\ell\cdot n^{O(1)}$ time. 
 
It is well known that given a CNF formula $\phi$ on $n$ variables, one can construct a set cover instance $I=(S,U,E)$ with $|U|=O(n)$ and $|S|=\Theta(k2^{n/k})$  in $2^{O(n/k)}$ time such that $\phi$ is satisfiable if and only if $opt(I)=k$. This implies that, assuming ETH there is no algorithm that can construct  $(k,|S|,m,\ell,h)$-gap-gadgets with  $k(1+\ell ln |U|+\ln m)<h$ and $|U|^\ell\le 2^{o(n)}$ in
$2^{o(n)}$ time.
\end{remark}

\subsection{Construction of Gap-Gadgets}
In~\cite{Lin18}, a similar gadget is used to prove the parameterized complexity of $k$-Biclique. One would wonder if the randomized construction from~\cite{Lin18} can be used to construct the gap-gadget in this paper.
Informally, the gadget in~\cite{Lin18} is a bipartite random graph $T=(A,B,E)$ satisfying  the following properties with high probability:
\begin{description}
\item[(T1)] a $k$-vertex set in $B$ has $m=n^{\Theta(1/k)}$ common neighbors;
\item[(T2)] any $(k+1)$-vertex set in $B$ has at most $O(k^2)$ common neighbors.
\end{description}
 It is not hard to show that if $Y\subseteq A$ is an $m$-vertex set and every vertex in $Y$ has at least $k+1$ neighbors in $X\subseteq B$, then $|X|\ge \sqrt[k+1]{\frac{|Y|}{O(k^2)}}$ by (T2) and  the pigeonhole principle. We may partition the vertex set $A$ into $m$ parts. Each part contains $n^{1-\Theta(1/k)}$ vertices. This gives us  a gap-gadget with large gap $h=\sqrt[k+1]{\frac{m}{O(k^2)}}$ and $\ell=n^{1-\Theta(1/k)}$. Unfortunately, such gadget does not suit our purpose. We need a gap-gadget with $\ell\le \log n/\log\log n$. In this section, we provide a construction using universal sets.

\begin{lemma}\label{lem:constgapgadget}
There is an  algorithm that can, for every $k,h,n\in\mathbb{N}$ with $k\log\log n\le \log n$ and $h\le \frac{\log n}{(2+\epsilon)\log\log n}$, compute a $(k,n,n\log h,h^k,h)$-Gap-Gadget   in $O(n^{4})$ time.
\end{lemma}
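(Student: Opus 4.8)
The plan is to make the gadget's edges be governed by a single family of hash functions $\psi_1,\dots,\psi_m:[n]\to[h]$, one per part $A_i$; this will make (G1)--(G3) hold automatically and will reduce (G4) to the statement that the family is a perfect hash family, which I then build from a binary universal set via Lemma~\ref{lem:universalset}. Concretely, put $q:=\lceil\log_2 h\rceil$, $B_j:=[n]$ for $j\in[k]$, $m:=nq$, index the vertices of $A_i$ by the $h^k$ maps $g:[k]\to[h]$ (so $|A_i|=h^k=\ell$), and join $a_g\in A_i$ to $b\in B_j$ exactly when $\psi_i(b)=g(j)$. Then (G1) and (G2) are built in, and (G3) holds since for any $(b_1,\dots,b_k)\in B_1\times\cdots\times B_k$ the vertex $a_{g_i}\in A_i$ with $g_i:=(\psi_i(b_1),\dots,\psi_i(b_k))$ is adjacent to every $b_j$. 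The remaining work is (G4) and the construction of the $\psi_i$.

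For (G4) I would prove the contrapositive. Fix $X\subseteq B$ with $|X|\le h$, write $X_j:=X\cap B_j\subseteq[n]$ and $Y:=\bigcup_{j\in[k]}X_j$, so $|Y|\le\sum_{j}|X_j|=|X|\le h$. Suppose the $\psi_i$ form an $(n,h)$-\emph{perfect hash family}, i.e.\ for every $Y\subseteq[n]$ with $|Y|\le h$ some $\psi_i$ is injective on $Y$; fix such an $i$. For every $g:[k]\to[h]$ the number of neighbours of $a_g$ in $X$ equals $\sum_{j\in[k]}|X_j\cap\psi_i^{-1}(g(j))|\le k$, because $\psi_i$ is injective on each $X_j\subseteq Y$. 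Hence every vertex of $A_i$ has at most $k$ neighbours in $X$, so for every legal tuple $(a_1,\dots,a_m)$ the member $a_i$ fails to have $k+1$ neighbours in $X$; thus the hypothesis of (G4) cannot hold when $|X|\le h$, which is (G4).

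To build the hash family I would take an $(nq,hq)$-universal set $\mathcal W=\{w_1,\dots,w_{nq}\}$ (a set of $nq$ binary strings of length $nq$, produced by Lemma~\ref{lem:universalset} in $O((nq)^3)$ time provided $hq\cdot2^{hq}\le\sqrt{nq}$), cut the $nq$ coordinates into $n$ consecutive blocks of length $q$, let $\phi_w(v)\in\{0,1\}^q$ be the restriction of $w$ to the $v$-th block, fix a surjection $\pi:\{0,1\}^q\to[h]$ together with an $h$-set $C\subseteq\{0,1\}^q$ on which $\pi$ is bijective, and set $\psi_i:=\pi\circ\phi_{w_i}$, which has range exactly $[h]$. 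The perfect-hashing property follows from universality: given $Y\subseteq[n]$ with $|Y|=h$ (enlarge $Y$ if smaller, using $n\ge h$) and any bijection $\tau:Y\to C$, some $w_i$ realises on the $hq$ coordinates forming the blocks of $Y$ the pattern in which the block of each $v\in Y$ is $\tau(v)$, so $\phi_{w_i}$ maps $Y$ bijectively onto $C$ and $\psi_i$ is injective on $Y$. Since $k\log h\le k\log\log n\le\log n$ forces $h^k\le n$, the gadget has $O(n^2\log\log n)$ vertices and $O(n^3\log n)$ edges, so together with the $O((nq)^3)$ cost of $\mathcal W$ everything runs in $O(n^4)$ time; here $m=nq=n\lceil\log_2 h\rceil$, which is $n\log h$ when $h$ is a power of two (and within a factor $2$ of it otherwise).

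The only genuinely nontrivial point — and the main obstacle — is checking that Lemma~\ref{lem:universalset} is applicable, i.e.\ that $hq\cdot2^{hq}\le\sqrt{nq}$. From $h\le\frac{\log n}{(2+\epsilon)\log\log n}$ and $q\le\log\log n+1$ one gets $hq\le\frac{\log n}{2+\epsilon}(1+o(1))$, hence $2^{hq}\le n^{\frac{1}{2+\epsilon}+o(1)}\le n^{1/2-\delta}$ for some $\delta=\delta(\epsilon)>0$ once $n$ is past a threshold depending on $\epsilon$, which absorbs the factor $hq$; this is exactly why the hypothesis reads $\frac{\log n}{(2+\epsilon)\log\log n}$ rather than $\frac{\log n}{2\log\log n}$. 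Everything else — the block bookkeeping that keeps the range equal to $[h]$ so that $|A_i|=h^k$ holds on the nose, the estimate $h^k\le n$, and dealing with the finitely many small $n$ by brute force within the $O(n^4)$ bound — is routine.
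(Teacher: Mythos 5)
Your proposal is correct and follows essentially the same route as the paper: both build an $(n,h)$-perfect hash family from a binary universal set by chopping each string into $n$ blocks of $\approx\log h$ bits (your $\psi_i$ are exactly the rows of the paper's matrix $M$, and your maps $g:[k]\to[h]$ are its vectors $\vec a\in[h]^k$), and both derive (G4) by choosing the row that is injective on the at most $h$ relevant universe elements and applying pigeonhole across the $k$ blocks $B_j$. The only deviations are cosmetic — you handle the $\lceil\log_2 h\rceil$ ceiling and the surjection $\pi:\{0,1\}^q\to[h]$ explicitly where the paper elides them, and you phrase (G4) contrapositively rather than by contradiction — neither of which changes the argument.
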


\begin{proof}
Let $m=n\log h$ and $K=h\log h$.
Note that $(\log m)/2=(\log n+\log\log h)/2\ge (2+\epsilon) h\log h/2\ge \log h+\log\log h+h\log h=\log K+K$, i.e., $K2^K\le \sqrt{m}$.
By Lemma~\ref{lem:universalset},   an $(m,K)$-universal set $S=\{s_1,s_2,\ldots,s_m\}$ can be constructed in $O(m^{3})\le O(n^4)$ time. Partition every $s\in S$ into $n=\frac{m}{\log h}$ blocks so that each block has length $\log h$. Interpret the values of  blocks as integers in $[h]$.  We obtain an $m\times n$ matrix $M$ by setting the value  $M_{r,c}$  equal to the value of the $c$-th block of $s_r$. The matrix $M$  satisfies the following conditions.
\begin{description}
\item[(M1)] For all $r\in[m]$ and $c\in[n]$, $M_{r,c}\in[h]$.
\item[(M2)] For any set $C\subseteq [n]$ with $|C|\le h$, there exists a row $r\in [m]$ such that $|\{M_{r,c} : c\in C\}|=|C|$.
\end{description}
Condition (M1) is obvious. To see why (M2) holds, for each $C\subseteq  [n]$ with $|C|\le h$, let $C'$ be the set of indices corresponding to the blocks in $C$. Note that $|C'|=|C|\log h\le h\log h=K$. By the property of $(m,K)$-universal set, there exists an $s_r\in S$ such that each block in $C$ takes distinct value. It follows that $|\{M_{r,c} : c\in C\}|=|C|$.

\medskip

For each $i\in [m]$, let 
\[
A_{i}=\{(a_1,a_2,\ldots,a_k) : \text{for all $j\in[k]$, $a_j\in [h]$}\}.
\] 
Note that $|A_i|=h^k$. For each $j\in[k]$, let $B_j=[n]$. Let $T=(A,B,E)$ be a bipartite graph with
\begin{itemize}
\item $A=\bigcup_{i\in[m]}A_{i}$.
\item $B=\bigcup_{j\in[k]}B_j$.
\item $E=\{\{\vec{a},b\} : \text{$\vec{a}\in A_{i}, b\in B_j$ and  $M_{i,b}=\vec{a}[j]$  for all $j\in [k]$}\}$.
\end{itemize}
We will show that $T$ is an $(k,n,m,h^k,h)$-gap-gadget. Obviously, $T$ satisfies (G1) and (G2).

\subparagraph*{$T$ satisfies (G3).} For any $b_1\in B_1,b_2\in B_2,\ldots,b_k\in B_k$. We define $\vec{a}_{i}\in A_{i}$ by setting \[
\vec{a}_{i}=(M_{i,b_1},M_{i,b_2},\ldots,M_{i,b_k}).
\]  
It is routine to check that $\{\vec{a}_{i}, b_j\}\in E$ for all $i\in[m]$ and $j\in[k]$.

\subparagraph*{$T$ satisfies (G4).} Let $X\subseteq B$ and $\vec{a}_1\in A_1, \vec{a}_2\in A_2,\ldots, \vec{a}_{m}\in A_{m}$.  
Suppose for every $i\in[m]$, $\vec{a}_{i}$ has at least $k+1$ neighbors in $X$ and $|X|\le h$.
By (M2), there exists an $r\in [m]$ such that $|\{M_{r,c}: c\in X\}|=|X|$. Since $\vec{a}_r$ has at least $k+1$ neighbors in $X$, there exists an $j\in [k]$ such that $\vec{a}_r$ has two neighbors $b,b'$ in $X\cap B_j$. According to the definition of $E$, we must have
\[
M_{r,b}=M_{r,b'}=\vec{a}_r[j].
\]
This contradicts the fact that $|\{M_{r,c}: c\in X\}|=|X|$.
\end{proof}

The  construction above produces gap-gadgets with $\ell=h^k$. Note that the parameter  $h$ is related to the inapproximation factor we will get for the set cover problem and  the running time of our reduction is $n^{O(1)}|U|^\ell$. We want to set $h$ as large as possible while keeping the running time of reduction  in $f(k)\cdot n^{O(1)}$. Assuming $|U|=g(k)\cdot (\log n)^{O(1)}$, the best we can achieve is $h= (\log n/\log\log n)^{1/k}$.

\subparagraph*{On the probabilistic construction.} A natural question is, can we construct gap-gadgets with better parameters $h$ and $\ell$, say $\ell=h=o(\log n)$,  using the probabilistic method? 

Consider the probability space  of bipartite random graphs on the vertex sets $A=A_1\cup A_2\cup\cdots\cup A_m$ and $B=B_1\cup B_2\cup\cdots B_k$, where $|A_i|=\ell$ and $|B_j|=n$. Let $p$ be the edge probability. Each bipartite graph $T$ on $A\cup B$ has probability $Pr[T]=p^{|E(T)|}(1-p)^{|A|\cdot|B|-|E(T)|}$. Fix  $k$ vertices $b_1,b_2,\ldots,b_k$ in $B$. Let $X_{good}$ be the random variable  that for every bipartite graph $T$, $X_{good}(T)$ is the number of complete bipartite subgraphs of $T$ which contains  exactly one vertex in each $A_i$ and the $k$ vertices $b_1,b_2,\ldots,b_k$ in $B$. Let $X_{bad}$ be the random variable  that  for every bipartite graph $T$, $X_{bad}(T)$ is the number of  subgraphs of $T$ with $h$ vertices in $B$ and one vertex in each $A_i$ such that each vertex in $A_i$ has at least $k+1$ neighbors in $B$. We want to set the edge probability $p$ so that $\Pr[X_{bad}(T)\ge 1]+\Pr[X_{good}(T)=0]\le 1-n^{-c}$ for some constant $c>0$.  One way to bound $\Pr[X_{bad}(T)\ge 1]$ above is to use  Markov's inequality, which gives us $\Pr[X_{bad}(T)\ge 1]\le E[X_{bad}]$. So we might assume that $E[X_{bad}]<1$. On the other hand, we have $E[X_{good}]\ge \Pr[X_{good}(T)\ge 1]\ge n^{-c}$. Note that expectations of these two random variables are  $E[X_{good}]=\ell^{m}p^{km}$  and $E[X_{bad}]=\ell^m \binom{n}{h} p^{(k+1)m}\binom{h}{k+1}^m$. We deduce that 
\[
m\log\ell+mk\log p>-c\log n
\] 
and 
\[ 
m\log\ell+h\log n+m(k+1)\log p+m(k+1)\log h<0. 
\] 
Thus 
\begin{equation}\label{eq:hell}
\frac{c\log n}{mk}+\frac{\log \ell}{k}>\frac{\log\ell}{(k+1)}+\frac{h\log n}{m(k+1)}+\log h.
\end{equation} 
We might choose $m$ large enough so that  the terms $\frac{c\log n}{mk}$ and $\frac{h\log n}{m(k+1)}$ in (\ref{eq:hell}) become  relatively small. In order to make (\ref{eq:hell}) hold, we have to set  $\ell\ge h^{O(k^2)}$. This does not give us better $(k,n,m,\ell,h)$-gap-gadgets.

\subsection{Proofs of  Theorem~\ref{thm:SETHbased} and Theorem~\ref{thm:ETHbased}}\label{sec:all}

\begin{lemma}\label{lem:sat2setcover}
There is an algorithm, which given  $k\in\mathbb{N}$, $\delta>0$ with $(1+1/k^3)^{1/k}\le (1+\delta)/(1+\delta/2)$ and $(1+\delta/2)^k\ge 2k^4$   and a SAT instance $\phi$ with $n$ variables and $Cn$ clauses, where $n$ is much larger than $k$ and $C$,
outputs an integer $N\le 2^{n/k+n/k^3}$ and a set cover instance $I$  satisfying the following conditions in $2^{5n/k}$ time.
\begin{itemize}
\item $|S(I)|+|U(I)|\le N $. 
\item If $\phi$ is satisfiable, then $opt(I)\le k$.
\item If $\phi$ is not satisfiable, then $opt(I)> \frac{1}{1+\delta}\cdot\sqrt[k]{\frac{\log N}{\log\log N}}$.
\end{itemize}
\end{lemma}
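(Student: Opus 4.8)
The plan is to obtain $I$ by composing three reductions: the textbook encoding of SAT into \textsc{SetCover}, the gap‑gadget of Lemma~\ref{lem:constgapgadget}, and the gap‑producing reduction of Lemma~\ref{lem:reduction}. The only quantity to choose is the gadget gap $h$, and the arithmetic hypotheses on $\delta$ are exactly what makes a single choice of $h$ satisfy the size bound $N\le 2^{n/k+n/k^3}$, the time bound $2^{5n/k}$, and the soundness gap all at once.

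First, pad $\phi$ with dummy variables so that $k\mid n$, and partition the variables into $k$ blocks of size $n/k$. For block $i$ let $S_i$ be the family of $2^{n/k}$ sets indexed by partial assignments to that block, the set of $\alpha$ covering exactly the clauses satisfied by $\alpha$; in addition, for each $i$ add a private element of the universe covered by every set of $S_i$ and by no other set. This yields, in $2^{O(n/k)}$ time, an instance $I_0=(S,U,E)$ with $S=S_1\cup\cdots\cup S_k$, $|S_i|=n':=2^{n/k}$, $|U|=Cn+k$, and—since a cover of size $\le k$ must use exactly one set per block—$\phi$ satisfiable $\iff opt(I_0)\le k$. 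Now let $h$ be the largest integer with $h^k\cdot(\log(Cn+k)+2)\le n/k^3$; for $n$ large relative to $k$ and $C$ this $h$ also meets the preconditions $k\log\log n'\le\log n'$ and $h\le\frac{\log n'}{(2+\epsilon)\log\log n'}$ of Lemma~\ref{lem:constgapgadget}, since $h\le n^{1/k}$ while the right‑hand side is $\Theta\!\big(n/(k\log n)\big)$. Apply Lemma~\ref{lem:constgapgadget} with parameters $(k,h,n')$ to build a $(k,n',n'\log h,h^k,h)$‑Gap‑Gadget in $O((n')^4)$ time, and feed $I_0$ together with this gadget into Lemma~\ref{lem:reduction}. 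The resulting $I$ has $S(I)=S$ and $|U(I)|=n'\log h\cdot|U|^{h^k}$, is produced in $|U|^{h^k}\cdot (n')^{O(1)}$ further time, and satisfies $opt(I_0)\le k\Rightarrow opt(I)\le k$ and $opt(I_0)>k\Rightarrow opt(I)>h$; combined with the previous paragraph this gives the completeness and soundness clauses with gap $h$.

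Next set $N:=|S(I)|+|U(I)|=kn'+n'\log h\cdot(Cn+k)^{h^k}$. By the choice of $h$ one has $\log\!\big(\log h\cdot(Cn+k)^{h^k}\big)\le n/k^3$ and $kn'\le n'(Cn+k)^{h^k}$, so $N\le 2n'(Cn+k)^{h^k}\le 2^{n/k+n/k^3}$, which is the output $N$. The running time is dominated by building the gadget and listing $I$, hence is at most $2^{5n/k}$ for $n$ large. For the gap, $n'\le N\le 2^{n/k+n/k^3}$ gives $n/k\le\log N\le (n/k)(1+1/k^3)$ and $\log\log N\ge\log(n/k)$, hence
\[
\frac{\log N}{\log\log N}\le(1+1/k^3)\cdot\frac{n/k}{\log(n/k)}.
\]
Taking $k$‑th roots and using $(1+1/k^3)^{1/k}\le(1+\delta)/(1+\delta/2)$ reduces the claim to $\frac{1}{1+\delta/2}\big(\tfrac{n/k}{\log(n/k)}\big)^{1/k}\le h$; this in turn follows from the definition of $h$ once the $+2$ term, the floor, and the gap between $\log(Cn+k)$ and $\log(n/k)$ are absorbed into the slack factor $\tfrac{(1+\delta/2)^k}{2k^2}\ge 1$, which is what the hypothesis $(1+\delta/2)^k\ge 2k^4$ provides. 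Thus $opt(I)>h$ yields $opt(I)>\frac{1}{1+\delta}\sqrt[k]{\log N/\log\log N}$, as required.

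The main obstacle is precisely the calibration of $h$: it must be small enough that $|U|^{h^k}\le 2^{n/k^3}$ (which controls both $N$ and the running time) yet large enough to deliver the claimed $\frac{1}{1+\delta}$‑gap, and these two requirements only barely coexist—this is exactly why the statement assumes $(1+\delta/2)^k\ge 2k^4$ and $(1+1/k^3)^{1/k}\le(1+\delta)/(1+\delta/2)$. The real work is bookkeeping: tracking the lower‑order contributions (the $\log h$ factor, the dummy padding, the clause‑count constant $C$, the gadget preconditions, the floor defining $h$) and checking they all fit inside the slack obtained by writing $1+\delta/2$ for $1+\delta$ and $2k^4$ for the bare $k^{O(1)}$ that is actually needed.
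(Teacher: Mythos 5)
Your proof follows the same route as the paper: encode SAT into an initial set-cover instance $I_0$ with $|S_i|=2^{n/k}$ and $|U|=O(n)$ (the paper adds $k$ private universe elements $u_1,\dots,u_k$ for the same reason you do), then construct a gap-gadget via Lemma~\ref{lem:constgapgadget} and plug both into Lemma~\ref{lem:reduction}. The only genuine difference is bookkeeping: you output $N:=|S(I)|+|U(I)|$ and set $h$ implicitly via $h^k(\log(Cn+k)+2)\le n/k^3$, whereas the paper outputs the explicit formula $N:=M^{1+1/k^3}$ with $M=k2^{n/k}$ and sets $h:=\frac{1}{1+\delta/2}\bigl(\frac{\log M}{\log\log M}\bigr)^{1/k}$ directly.

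One arithmetic slip to flag: you assert $\log N\le (n/k)(1+1/k^3)$, but $\log\bigl(2^{n/k+n/k^3}\bigr)=n/k+n/k^3=(n/k)(1+1/k^2)$, so the correct exponent is $1/k^2$, not $1/k^3$. The paper sidesteps this by choosing $N=M^{1+1/k^3}$, which gives $\log N=(1+1/k^3)\log M$ \emph{exactly}, so that the hypothesis $(1+1/k^3)^{1/k}\le(1+\delta)/(1+\delta/2)$ plugs in verbatim. In your version the needed bound is $(1+1/k^2)^{1/k}\le(1+\delta)/(1+\delta/2)$, which does hold (it follows from $(1+\delta/2)^k\ge 2k^4$, since that forces $\delta=\Omega\bigl(\frac{\log k}{k}\bigr)\gg\frac{1}{k^3}$), but it is not the inequality you cite. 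The conclusion survives because, as you note, the slack factor $\frac{(1+\delta/2)^k}{2k^2}\ge k^2$ is enormous; still, the chain of inequalities as written does not quite match the hypotheses it invokes, and fixing it either means outputting the paper's explicit $N=M^{1+1/k^3}$ or adding a line deriving $(1+1/k^2)^{1/k}\le(1+\delta)/(1+\delta/2)$ from $(1+\delta/2)^k\ge 2k^4$.
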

\begin{proof}
Let $k$ be a positive integer and $\phi$ be a CNF with $n$ variables and $Cn$ clauses. We first construct a set cover instance $I'=(S',U',E')$  as follows.  Partition the variable set into $k$ parts, each  having at most $\lceil n/k\rceil$ variables. For each $i\in [k]$, let $S_i$ be the set of assignments to the $i$-th part. Let $S'=S_1\cup\cdots\cup S_k$. Let $U'$ be the set consisting of all the clauses of $\phi$ and $k$ additional nodes $u_1,u_2,\ldots,u_k$. For every $i\in[k]$ and assignment $s\in S_i$, we add an edge between $s$ and $u_i$.  If the assignment $s\in S'$ satisfies a clause $u\in U'$, we also add an edge between  $u$ and  $s$. The set cover instance $I'$ has the following properties.
\begin{itemize}
\item If $\phi$ is satisfiable, then $opt(I')=k$. Moreover, there exist $k$ vertices $s_1\in S_1,\cdots,s_k\in S_k$ that can cover the whole set $U'$.
\item If $\phi$ is not satisfiable, then $opt(I')>k$.
\item $|U'|=k+Cn$.
\item $|S'|\le k2^{n/k}$.
\end{itemize}
Let $M= k2^{n/k}\ge |S|$ and $N=M^{1+1/k^3}\le 2^{n/k+n/k^3}$. Note that $\log M/\log\log M\ge n/(k\log n)\ge k$. 
Applying Lemma~\ref{lem:constgapgadget} with $k\gets k$, $n\gets M$,  $\ell\gets\frac{\log M}{(1+\delta/2)^k\log\log M}$, $h\gets\frac{1}{1+\delta/2}\cdot\sqrt[k]{\frac{\log M}{\log\log M}}$ and $m\gets M\log h\le M\log\log M$, we obtain a gap-gadget $T$ in $O(M^{4}\le 2^{5n/k})$ time. Using Lemma~\ref{lem:reduction} on $I'$ and $T$, we obtain our target set cover instance $I=(S,U,E)$ satisfying the following properties.
\begin{itemize}
\item If $\phi$ is a yes-instance, then $opt(I)\le k$.
\item If $\phi$ is a no-instance, then $opt(I)>\frac{1}{1+\delta/2}\cdot\sqrt[k]{\log M/\log\log M}$. Using $(1+1/k^3)^{1/k}\le (1+\delta)/(1+\delta/2)$,  we get $opt(I)> \frac{1}{1+\delta}\cdot \sqrt[k]{\log N/\log\log N}$.
\item $|S|=|S|\le k2^{n/k}$.
\item $|U|\le M\log\log M\cdot |U|^{\frac{\log M}{(1+\delta/2)^k\log\log M}}=M\log\log M \cdot (k+Cn)^{\frac{\log M}{(1+\delta/2)^k\log\log M}}$.
\end{itemize}
The number of vertices in $I$ is 
\begin{align*}
|S(I)|+|U(I)|&\le M+M\log\log M\cdot (k+Cn)^{\frac{\log M}{(1+\delta/2)^k\log\log M}}\\
&\le M+M\log\log M\cdot (2Ck\log M)^{\frac{\log M}{(1+\delta/2)^k\log\log M}}\\
&\le M+M\log\log M\cdot (\log M)^{\frac{2\log\log M}{(1+\delta/2)^k\log\log M}}\quad\text{(using $ \log M\ge 2Ck$ for large $n$)}\\
&\le M+M\log\log M\cdot M^{\frac{2}{(1+\delta/2)^k}}\\
&\le M+M\log\log M\cdot  M^{1/k^4}\quad\text{(using $(1+\delta/2)^k\ge 2k^4$)}\\
&\le M^{1+1/k^3}\quad\text{(using $M^{1/k^3}\ge 1+ M^{1/k^4}\log\log M$ for large $n$)}\\
&= N.
\end{align*}
\end{proof}

Now we are ready to prove  Theorem~\ref{thm:SETHbased}. 
Suppose for some computable function $f$,  there is an $f(k)\cdot N^{k-\epsilon}$-time algorithm that can, for every  $N$-vertex set cover instance $I$ and every integer $k$, distinguish between $opt(I)\le k$ and $opt(I)\ge\frac{1}{1+\delta}\cdot \sqrt[k]{\frac{\log N}{\log\log N}}$. For every $\delta\in(0,1)$, choose $k\in\mathbb{N}$ large enough so that  $(1+1/k^3)^{1/k}\le (1+\delta)/(1+\delta/2)$ and $(1+\delta/2)^k\ge 2k^4$ hold. Let $\epsilon'=1-\epsilon/k+1/k^2$, by SETH, there exists an integer $d$ such that  $d$-SAT with $n$ variables cannot be solved in $2^{n(1-\epsilon')}$-time. Given an instance $\phi$ of $d$-SAT with $n$ variables and $m$ clauses. By the sparsification lemma~\cite{impagliazzo2001problems}, we can assume that   $m=C_{d,\epsilon'}\cdot n$ for some constant $C_{d,\epsilon'}$ depending on $d$ and $\epsilon'$. Without loss of generality, assume that $n$ is much larger than  $k$. Applying Lemma~\ref{lem:sat2setcover} on $\phi$ and $k$, we obtain a set cover instance $I$ with $N\le 2^{n/k+n/k^3}$ vertices in time $2^{5n/k}\le 2^{\epsilon n}$ for $k\ge 5/\epsilon$.  Then we use the approximation algorithm to decide if $opt(I)\le k$ or $opt(I)\ge\frac{1}{1+\delta}\cdot \sqrt[k]{\frac{\log N}{\log\log N}}$.  Thus we can solve $d$-SAT in time $2^{\epsilon n }+ f(k)\cdot N^{k-\epsilon}\le 2^{\epsilon n}+f(k)\cdot 2^{(n/k+n/k^3)(k-\epsilon)}\le 2^{n(1-\epsilon/k+1/k^2)}=2^{n(1-\epsilon')}$, which contradicts SETH.

Theorem~\ref{thm:ETHbased} can be proved  similarly. By ETH, there exists $\epsilon>0$ such that $3$-SAT on $n$ variables cannot be solved in $2^{\epsilon n}$ time. Let $\epsilon'=\epsilon/2$. For every  $3$-SAT instance $\phi$ with $n$ variable and $Cn$ clause, where $n$ is much larger than $k$, apply Lemma~\ref{lem:sat2setcover} to obtain a set cover instance $I$ with $N=2^{n/k+n/k^3}$ vertices in $2^{5n/k}\le 2^{\epsilon' n}$ time. If there is an $f(k)\cdot N^{\epsilon'k}$-time algorithm that can distinguish between $opt(I)\le k$ and $opt(I)>\frac{1}{1+\delta}\cdot \sqrt[k]{\frac{\log N}{\log\log N}}$, then we can decide whether $\phi$ is satisfiable in time $2^{\epsilon'n}+f(k)\cdot 2^{(n/k+n/k^3)\cdot \epsilon'k}\le 2^{\epsilon n}$.

\subsection{Proof of Theorem~\ref{thm:kSumbased}}
We use a lemma in~\cite{abboud2014losing} to reduce  $k$-SUM to $k$-VECTOR-SUM over small numbers. Then we present a reduction from $k$-VECTOR-SUM to set cover.
\begin{lemma}[Lemma 3.1 of \cite{abboud2014losing}]\label{lem:ksum2kvectorsum}
Let $k,p,d,s,M\in\mathbb{N}$ satisfy $k<p$, $p^d\ge kM+1$, and $s=(k+1)^{d-1}$. There is a collection of mappings $f_1,\ldots,f_s : [0,M]\times [0,kM]\to [-kp,kp]^d$, each computable in time $O(poly \log M+k^d)$, such that for all numbers $x_1,\ldots,x_k\in [0,M]$ and targets $t\in [0,kM]$,
\[
\sum_{j=1}^kx_j=t \Leftrightarrow \exists i\in[s] \text{ such that } \sum_{j=1}^k f_i(x_j,t)=\vec{0}.
\]
\end{lemma}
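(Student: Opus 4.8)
The plan is to reduce the single integer equation $\sum_{j=1}^k x_j=t$ to a family of systems of \emph{coordinatewise} linear equations, by writing every number in base $p$ and processing the arithmetic one digit at a time. The only genuinely non-coordinatewise feature of integer addition is the carrying, and I would handle it by brute force: guess the entire carry sequence. Since the carry into each of the $d-1$ internal digit positions is a number in $\{0,\dots,k-1\}$, there are at most $(k+1)^{d-1}=s$ such guesses, and each guess will give one of the mappings $f_i$.

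First I would fix the base $p$; the hypothesis $p^d\ge kM+1$ ensures that each $x_j$, the sum $\sum_j x_j\le kM$ and the target $t\le kM$ all lie in $[0,p^d)$ and so have unique $d$-digit base-$p$ expansions $x=\sum_{\ell=0}^{d-1}x^{[\ell]}p^\ell$ and $t=\sum_\ell t^{[\ell]}p^\ell$ with digits in $\{0,\dots,p-1\}$. The elementary fact I would use is that $\sum_{j=1}^k x_j=t$ holds if and only if there exist integers $c_1,\dots,c_{d-1}\ge 0$ (set $c_0=c_d=0$) such that
\[
\sum_{j=1}^k x_j^{[\ell]}+c_\ell \;=\; t^{[\ell]}+p\,c_{\ell+1}\qquad\text{for every }\ell\in\{0,\dots,d-1\}.
\]
The forward direction uses the carries of schoolbook addition, for which one checks by induction that $c_\ell\le k-1$ (a partial digit sum is at most $k(p-1)+(k-1)<kp$, so the next carry is at most $\lfloor(kp-1)/p\rfloor=k-1$) and that the top carry $c_d$ vanishes because $\sum_j x_j<p^d$; the backward direction is a one-line telescoping, $\sum_j x_j=\sum_\ell\bigl(t^{[\ell]}+p c_{\ell+1}-c_\ell\bigr)p^\ell=t+\bigl(c_d p^d-c_0\bigr)=t$. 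In particular it is enough to range the guessed carry vector $\vec c=(c_1,\dots,c_{d-1})$ over $\{0,\dots,k-1\}^{d-1}$, which has at most $(k+1)^{d-1}$ elements (pad the family up to exactly $s$ maps if one insists on equality).

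For each guess $\vec c$ (with $c_0=c_d=0$) I would define $f_{\vec c}\colon[0,M]\times[0,kM]\to\mathbb Z^d$ coordinatewise by
\[
f_{\vec c}(x,t)^{[\ell]}\;=\;k\cdot x^{[\ell]}\;-\;t^{[\ell]}\;+\;c_\ell\;-\;p\,c_{\ell+1}\qquad(\ell=0,\dots,d-1).
\]
Then $\sum_{j=1}^k f_{\vec c}(x_j,t)^{[\ell]}=k\bigl(\sum_j x_j^{[\ell]}+c_\ell-t^{[\ell]}-p\,c_{\ell+1}\bigr)$, which is zero precisely when the $\ell$-th digit equation holds; so $\sum_j f_{\vec c}(x_j,t)=\vec 0$ iff all digit equations hold for $\vec c$, and combining this with the characterization above yields the claimed equivalence. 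The factor $k$ in the leading term is the crux: it lets the slot-independent correction $-t^{[\ell]}+c_\ell-p c_{\ell+1}$ be subtracted once in each of the $k$ slots — $k$ times in all — without any division by $k$, which is exactly what makes it legitimate to apply a single map $f_{\vec c}$ to all $k$ inputs. For the codomain, $x^{[\ell]},t^{[\ell]}\le p-1$ and $0\le c_\ell\le k-1$ give $-(p-1)-p(k-1)\le f_{\vec c}(x,t)^{[\ell]}\le k(p-1)+(k-1)$, i.e.\ the entries lie in $[-kp,kp]$. Each value $f_{\vec c}(x,t)$ is produced by extracting the base-$p$ digits of $x$ and of $t$, decoding the index into the mixed-radix vector $\vec c$, and writing down $d$ entries of magnitude at most $kp$ — routine bookkeeping well within the stated $O(\mathrm{poly}\log M+k^d)$ bound.

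I expect the main obstacle to be conceptual rather than computational, and to lie in the tension between two features: integer addition is not coordinatewise (carries), while the reduction is forced to apply the \emph{same} map to all $k$ inputs, so it cannot set aside one distinguished slot to absorb $-t$ or the carry corrections. Guessing the complete carry sequence handles the first feature at the cost of the $(k+1)^{d-1}$ blow-up in the number of maps, and the $k$-scaling trick above handles the second; the only point needing a little care afterward is keeping the carries bounded by $k-1$ rather than $k$, which is what keeps every coordinate inside $[-kp,kp]$.
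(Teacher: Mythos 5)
Your proof is correct and matches the standard argument behind the cited Abboud--Lewi--Williams lemma: expand everything in base $p$, brute-force the carry vector, and scale the $x$-digit by $k$ so that the slot-independent correction $-t^{[\ell]}+c_\ell-pc_{\ell+1}$, applied once in each of the $k$ slots, cancels the factor $k$ on the digit equation $\sum_j x_j^{[\ell]}+c_\ell=t^{[\ell]}+pc_{\ell+1}$. The paper itself gives no internal proof --- it only cites Lemma~3.1 of \cite{abboud2014losing} --- and your reconstruction, including the slightly tighter carry bound $c_\ell\le k-1$ that keeps each coordinate in $[1-kp,\,kp-1]\subset[-kp,kp]$ with only $k^{d-1}\le s$ maps before padding, is a faithful, self-contained version of the original.
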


\begin{lemma}\label{lem:ksum2setcover}
There is an algorithm which, given $k$ sets $S_1,S_2,\ldots,S_k$ where $S_i$ is a set of $n$ vectors in $[-f(k),f(k)]^{g(k)\log n}$ for some computable functions $f$ and $g$,  outputs a set cover instance  $I=(S,U,E)$ with $|U|\le k^{(2f(k))^{k-1}}g(k)\log n$ and $S=S_1\cup S_2\cup\ldots\cup S_k$ in $k^{(2f(k))^{k-1}}g(k)n^{O(1)}$-time such that
\begin{itemize}
\item[(i)] if there exist $\vec{x}_1\in S_1,\ldots,\vec{x}_k\in S_k$ such that $\sum_{i\in [k]}\vec{x}_i=\vec{0}$, then $\{\vec{x}_1,\ldots,\vec{x}_k\}$ covers $U$;
\item[(ii)] if  the sum of any $k$ vectors  $\vec{x}_1\in S_1,\ldots \vec{x}_k\in S_k$ is not zero, then $opt(I)>k$.
\end{itemize}
\end{lemma}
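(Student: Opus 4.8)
The plan is to mirror the structure of Lemma~\ref{lem:reduction}, replacing the ``cover all clauses'' gadget by a ``check the vector sum is zero'' gadget, and then compose with the $k$-VECTOR-SUM instance produced by Lemma~\ref{lem:ksum2kvectorsum}. First I would apply Lemma~\ref{lem:ksum2kvectorsum} with a suitable choice of the prime $p$ (the smallest prime exceeding $k$, so $p\le 2k$ by Bertrand) and dimension $d$ chosen so that $p^d\ge kM+1$ where $M=f(k)$ is the coordinate bound; this turns the given $k$-VECTOR-SUM instance over $[-f(k),f(k)]^{g(k)\log n}$ into $s=(k+1)^{d-1}$ instances of ``exact vector sum equals $\vec 0$'' over the bounded alphabet $[-kp,kp]^{d}$, applied coordinatewise to each of the $g(k)\log n$ original coordinates. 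Counting, $d=O(\log(kM)/\log p)=O(\log f(k))$ and $s=(k+1)^{d-1}\le k^{O(\log f(k))}$, which is of the form $k^{(2f(k))^{k-1}}$ claimed (very generously); this accounts for the blow-up in $|U|$.

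Next I would build the set cover instance $I=(S,U,E)$ directly. Keep $S=S_1\cup\cdots\cup S_k$ exactly the input vectors. For the universe $U$: for each of the $g(k)\log n$ coordinate positions, each of the $s$ mappings $f_i$, and each ``digit position'' in the radix-$(2kp{+}1)$ (or binary) representation used to force a zero sum with no carries, create a block of universe elements that is covered by a choice $\vec x_j\in S_j$ precisely when the contribution of $f_i(\vec x_j)$ to that digit is consistent with the global sum being $\vec 0$. Concretely, one clean way is to use the standard SETH-style set cover encoding of ``$\sum$ of selected vectors $=\vec 0$'': for each coordinate and each possible value of the running partial sum one introduces elements so that a selection covers $U$ iff it picks one vector from each $S_j$ and these vectors sum to zero. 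We also add $k$ ``selector'' elements $u_1,\dots,u_k$ with $u_j$ adjacent to all of $S_j$, forcing any size-$k$ cover to take exactly one vector from each part. Then (i) is immediate: if $\vec x_1,\dots,\vec x_k$ sum to $\vec 0$, they hit every selector and every consistency element, so they cover $U$; and for (ii), if no such tuple exists, then any $k$-subset of $S$ either misses some selector $u_j$ (because it omits a part) or picks one vector per part but these do not sum to $\vec 0$, so by the definition of the consistency blocks some element is uncovered --- hence $opt(I)>k$. The running time and $|U|$ bounds follow by multiplying the $g(k)\log n$ coordinates, the $s$ mappings, and the $O(\log(kp))$ digits, all of which fit inside $k^{(2f(k))^{k-1}}g(k)\log n$.

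The main obstacle is getting the ``no carries'' encoding of the vector-sum constraint to interact correctly with the disjunction over the $s$ mappings from Lemma~\ref{lem:ksum2kvectorsum}. The subtlety is that Lemma~\ref{lem:ksum2kvectorsum} only guarantees $\sum_j \vec x_j = t \Leftrightarrow \exists i:\ \sum_j f_i(\vec x_j)=\vec 0$, so in the soundness direction I must ensure that a selection failing the sum condition fails it for \emph{every} $i$ simultaneously in a way that leaves some universe element uncovered; this is fine because the $f_i$ are fixed functions of the chosen vectors, so I can simply make the consistency block for mapping $i$ depend only on the $f_i$-images, and ``covering $U$'' requires covering the blocks for all $i$ at once --- but if even one tuple $\vec x_1,\dots,\vec x_k$ exists with some $i$ making the $f_i$-sum zero, then by the lemma the original sum is zero, contradicting (ii)'s hypothesis. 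So the logical direction actually works out, but I would need to be careful that the selector elements and the encoding of the (here trivial) target $t=\vec 0$ are set up so that covering with fewer than $k$ sets is impossible and covering with exactly $k$ sets forces the one-per-part structure; writing out the digit-block adjacency relation precisely, and verifying the $|U|$ and time bounds term by term, is the only genuinely fiddly part, and it is routine.
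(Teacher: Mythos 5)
Your proposal diverges from the paper's proof in two ways that amount to genuine gaps.

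First, you apply Lemma~\ref{lem:ksum2kvectorsum} \emph{inside} the proof of Lemma~\ref{lem:ksum2setcover}. This is a conflation of two separate steps: the hypothesis of Lemma~\ref{lem:ksum2setcover} is that the input already consists of $n$ vectors per part over the small alphabet $[-f(k),f(k)]^{g(k)\log n}$, and the lemma's sole job is to encode this $k$-\textsc{Vector-Sum} instance as a single set cover instance. Lemma~\ref{lem:ksum2kvectorsum} is used only once, in the proof of Theorem~\ref{thm:kSumbased}, as a preprocessing step that turns the integer $k$-\textsc{Sum} instance over $[-n^{2k},n^{2k}]$ into $s$ such vector instances which are then each fed into Lemma~\ref{lem:ksum2setcover}. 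Re-invoking it here introduces parameters ($p$, $d$, $s$) that simply should not appear.

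Second, and more seriously, the gadget you sketch --- a radix/digit encoding with ``running partial sums'' --- is not a working construction for set cover, and the genuinely hard part is precisely what you defer as routine. In set cover a universe element is covered iff it is adjacent to \emph{some} chosen set; coverage is a disjunction. A single element therefore cannot enforce a statement of the form ``the partial sum of the first $i$ chosen vectors' $j$-th coordinates equals $v$'', which is a conjunction over several choices, nor is there any natural linear order on the elements of an unordered cover to define such a partial sum. The paper's construction is quite different: it enumerates the set $D\subseteq[-f(k),f(k)]^k$ of all $k$-tuples summing to zero (so $|D|\le(2f(k))^{k-1}$), writes $D=\{\vec{a}_1,\ldots,\vec{a}_{|D|}\}$, and for each coordinate $j\in[g(k)\log n]$ takes the hypercube $U_j=[k]^{|D|}$. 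A vector $\vec{x}\in S_i$ covers $\vec{u}\in U_j$ iff there is an $\ell$ with $\vec{u}[\ell]=i$ and $\vec{x}[j]=\vec{a}_\ell[i]$. Covering all of $U_j$ forces the chosen tuple's $j$-th coordinates to form an element of $D$, and the element $(i,i,\ldots,i)\in U_j$ forces at least one pick from each part, so no separate selector gadget is needed. This hypercube structure is exactly where the $k^{(2f(k))^{k-1}}$ factor in $|U|$ comes from; your accounting of ``$s$ mappings times $O(\log(kp))$ digits'' does not produce that factor and is not attached to a construction whose completeness and soundness can be verified.
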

\begin{proof}
Let $D=\{(d_1,\ldots,d_k)\in[-f(k),f(k)]^k : \sum_{i\in[k]}d_i=0\}$. Note that $|D|\le (2f(k))^{k-1}$. Suppose $D=\{\vec{a}_1,\ldots,\vec{a}_{|D|}\}$. For every $j\in [g(k)\log n]$, let $U_j=[k]^{|D|}$. We define the target set cover instance $I=(S,U,E)$ as follows.
\begin{itemize}
\item $S=S_1\cup\cdots\cup S_k$.
\item $U=\bigcup_{i\in[g(k)\log n]}U_i$.
\item For every $\vec{x}\in S_i$ and every $\vec{u}\in U_j$, we add an edge $\{\vec{x},\vec{u}\}$ into $E$ if there exists $\ell\in[|D|]$ such that  $\vec{u}[\ell]=i$ and $\vec{x}[j]=\vec{a}_\ell[i]$.
\end{itemize}

\subparagraph*{Completeness.} Suppose there exist $\vec{x}_1\in S_1,\ldots,\vec{x}_k\in S_k$ such that $\sum_{i\in[k]}\vec{x}_i=\vec{0}$. Then for all $j\in [g(k)\log n]$ we have $\vec{x}_1[j]+\vec{x}_2[j]+\ldots+\vec{x}_k[j]=0$, i.e.,
\begin{equation}\label{eq:aell}
(\vec{x}_1[j],\vec{x}_2[j],\ldots,\vec{x}_k[j])=\vec{a}_\ell\in D\text{ for some $\ell\in[|D|]$}.
\end{equation}
For all $\vec{u}\in U_j$, let $i=\vec{u}[\ell]\in [k]$. Then by (\ref{eq:aell}), $\vec{x}_i[j]=\vec{a}_\ell[i]$. It follows that $\{\vec{x}_i,\vec{u}\}\in E$.
\subparagraph*{Soundness.} Suppose the sum of any $k$ vectors in $S_1\cup \cdots\cup S_k$ is not zero. Let $X$ be a subset of $S$ with $|X|\le k$, we need to show that $X$ does not cover $U$. Firstly, we note that if  $X\cap S_i=\emptyset$ for some $i\in[k]$, then the vector $\vec{u}=(i,i,\ldots,i)\in [k]^{|D|}$ is not covered by any vector in $X$. Now assume that $X=\{\vec{x}_1,\vec{x}_2,\ldots,\vec{x}_k\}$ and $\vec{x}_i\in S_i$ for all $i\in[k]$. Since $\sum_{i\in[k]}\vec{x}_i\neq\vec{0}$, there exists a $j\in[g(k)\log n]$ such that
\[
\sum_{i\in [k]}\vec{x}_i[j]\neq 0.
\] 
 We deduce that
\[
(\vec{x}_1[j],\vec{x}_2[j],\ldots,\vec{x}_k[j])\notin D.
\]
In other word, for all $\ell\in[|D|]$, there exists an $i_\ell\in [k]$ such that
\begin{equation}\label{eq:xilneqal}
\vec{x}_{i_\ell}[j]\neq \vec{a}_\ell[i_\ell].
\end{equation}
Define a vector $\vec{u}\in U_j$ such that for all $\ell\in [|D|]$,
\begin{equation}\label{eq:defu}
\vec{u}[\ell]=i_\ell.
\end{equation}
Suppose $\vec{u}$ is covered by $x_i\in X$, then by the definition, there exists $\ell\in[|D|]$ such that $i=\vec{u}[\ell]=i_\ell$  and $\vec{x}_{i_\ell}[j]=\vec{a}_\ell[i_\ell]$, which contradicts (\ref{eq:xilneqal}) and (\ref{eq:defu}).
\end{proof}

\noindent\textit{Proof of Theorem~\ref{thm:kSumbased}.} Given $k$ sets $S_1,\ldots,S_k$ of integers in $[-n^{2k},n^{2k}]$. 
Let $p=k^{4k^{c+1}}$, $M=2n^{2k}$ and $d=\log n/k^c$. Without loss of generality, assume that $k$ is large and  $n$ is much larger than $k$, we have $p^d=k^{4k\log n}\ge n^{4k}\ge 2kn^{2k}+1$. On the other hand, for any $\epsilon>0$, we can pick $c$ such that $s=(k+1)^d=n^{\log(k+1)/k^c}\le n^{\epsilon/4}$. Applying Lemma~\ref{lem:ksum2kvectorsum}, we obtain a collection of mappings $f_1,\ldots,f_s : [0,M]\times [0,kM]\to [-kp,kp]^d$ in $O(poly\log M+k^d)$ time such that
\begin{itemize}
\item  there exist $x_1\in S_1,\ldots,x_k\in S_k$ with $\sum_{j\in[k]}x_j=0$ if and only if there exist $i\in[s]$ such that $\sum_{j\in[k]}f_i(x_j+n^{2k},kn^{2k})=\vec{0}$.
\end{itemize}
Using Lemma~\ref{lem:constgapgadget}, we construct a $(k,n,O(n\log\log n), \frac{\log n}{(1+\delta/2)^k\log\log n},\frac{1}{(1+\delta/2)}\cdot(\frac{\log n}{\log\log n})^{1/k})$-gap-gadget $T$ for some small $\delta>0$.
For every $i\in[s]$, and $j\in[k]$, let $S_j^i=\{f_i(x+n^{2k},kn^{2k}) : x\in S_j\}$. Applying Lemma~\ref{lem:reduction} to  $S_1^i,S_2^i,\ldots,S_k^i$ and $T$, we obtain a set cover instance $I_i$ with $S(I_i)=S_1^i\cup S_2^i\ldots S_k^i$ and $|U(I_i)|\le n\log\log n\cdot (g(k)\log n)^{\frac{\log n}{(1+\delta/2)^k\log\log n}}\le n^{1+1/k^3}$. The set cover instances $I_1,\ldots,I_s$ satisfy the following properties.
\begin{itemize}
\item If there exist $x_1\in S_1,\ldots,x_k\in S_k$ with $\sum_{j\in[k]}x_j=0$, then there exist $i\in[s]$ and $y_1=f_i(x_1+n^{2k},n^{2k})\in S_1^i\ldots y_k=f_i(x_k+n^{2k},n^{2k})\in S_k^i$ such that $y_1,\ldots,y_k$ cover $U(I_i)$.
\item If there are no $x_1\in S_1,\ldots,x_k\in S_k$ with $\sum_{j\in[k]}x_j=0$, then for all $i\in[s]$, $opt(I_i)> \frac{1}{1+\delta/2}\cdot \left(\frac{\log n}{\log\log n}\right)^{1/k}$.
\end{itemize}
Let $N=n^{1+1/k^2}$. We have  
\[
|S(I_i)|+|U(I_i)|\le kn+ n^{1+1/k^3}\le N,
\] 
\[
f(k)\cdot N^{\lceil k/2\rceil-\epsilon}\le n^{\lceil k/2\rceil-\epsilon+1/k},
\] 
and 
\[
\frac{1}{(1+\delta)}\left(\frac{\log N}{\log\log N}\right)^{1/k}\le \frac{1}{(1+\delta/2)}\left(\frac{\log n}{\log\log n}\right)^{1/k}.
\]
For every $i\in [s]$, we apply the  $f(k)\cdot N^{\lceil k/2\rceil-\epsilon}$-time algorithm  to decide if $opt(I_i)\le k$ or $opt(I_i)>\frac{1}{1+\delta}\cdot (\log N/\log\log N)^{1/k}$. If for some $i\in[s]$, it found that $opt(I_i)\le  k$, then we know that the input instance of $k$-SUM is a yes-instance. The running time is $O(poly\log M+k^d)+f(k)\cdot N^{\lceil k/2\rceil-\epsilon}\le O(poly\log M+k^d)+s\cdot n^{\lceil k/2\rceil-\epsilon+1/k}\le n^{\lceil k/2\rceil-\epsilon/2}$ for large $k$. 
\subsection{Proof of Theorem~\ref{thm:w1version}}\label{sec:logu}
Firstly, we give a reduction from \textsc{Clique} to \textsc{Set-Cover} which produces instances with logarithmic sized universe set. The main idea of this reduction is due to Karthik et al.~\cite{karthik2017parameterized}.
\begin{lemma}\label{lem:clique2setcover}
There is an $n^{O(1)}$-time  algorithm which, given an integer $k$, an $n$-vertex  graph $G$ with $V(G)=V_1\cup V_2\cup\cdots\cup V_k$ such that  $G[V_i]$ is an independent set for all $i\in[k]$,  outputs a set cover  instance $I=(S,U,E)$ with $|U|=k^{O(1)}\log n$ and $S=E(G)=\bigcup_{\{i,j\}\in\binom{[k]}{2}} S_{\{i,j\}}$, where each $S_{\{i,j\}}$ is the set of edges between $V_i$ and $V_j$,  such that
\begin{itemize}
\item[(i)] if $G$ contains a $k$-clique,  then $opt(I)\le\binom{k}{2}$. Moreover, there exists a $\binom{k}{2}$-sized subset of $S$, which contains exactly one vertex from each $S_{\{i,j\}}$ ($\{i,j\}\in\binom{[k]}{2}$),  that can cover $U$;
\item[(ii)] if $G$ contains no $k$-clique, then $opt(I)>\binom{k}{2}
$.
\end{itemize} 
\end{lemma}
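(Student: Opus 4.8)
The plan is to follow the label-cover-style reduction that underlies Karthik et al.'s result, but to encode vertices by short binary strings so that the universe stays logarithmic. Fix $\ell=\lceil\log_2 n\rceil$ and assign to every vertex $v$ of $G$ a distinct string $\sigma(v)\in\{0,1\}^\ell$. The set family is taken to be $S=E(G)$, which (since each $G[V_i]$ is independent) is partitioned as $S=\bigcup_{\{i,j\}\in\binom{[k]}{2}} S_{\{i,j\}}$ as required. The universe will be engineered so that a cover of size $\le\binom{k}{2}$ is forced to pick exactly one edge $e_{\{i,j\}}$ from each $S_{\{i,j\}}$, and so that such a selection covers everything precisely when the chosen edges pairwise agree on their endpoints inside each $V_i$ — i.e. when they form a $k$-clique.

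Concretely, $U$ will have two kinds of elements. First, a \emph{selector} $w_{\{i,j\}}$ for each pair $\{i,j\}\in\binom{[k]}{2}$, adjacent to every edge of $S_{\{i,j\}}$; these $\binom{k}{2}$ elements force any cover of size $\le\binom{k}{2}$ to contain exactly one edge per pair. Second, for each $i\in[k]$, each ordered pair $(j,j')$ of distinct elements of $[k]\setminus\{i\}$, and each coordinate $t\in[\ell]$, a \emph{consistency} element $c_{i,j,j',t}$, made adjacent \emph{only} to: the edges of $S_{\{i,j\}}$ whose endpoint in $V_i$ has $t$-th bit $1$, and the edges of $S_{\{i,j'\}}$ whose endpoint in $V_i$ has $t$-th bit $0$. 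Thus a selection $\{e_{\{i,j\}}\}$ leaves $c_{i,j,j',t}$ uncovered exactly when the $V_i$-endpoint of $e_{\{i,j\}}$ has $t$-th bit $0$ while that of $e_{\{i,j'\}}$ has $t$-th bit $1$: it detects one direction of a disagreement at coordinate $t$, and ranging over ordered pairs covers both directions. The total is $\binom{k}{2}+k(k-1)(k-2)\ell=k^{O(1)}\log n$ elements, and the whole instance is computable in $n^{O(1)}$ time.

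For completeness, from a $k$-clique $\{v_1,\dots,v_k\}$ with $v_i\in V_i$ I would take $X=\{v_iv_j:\{i,j\}\in\binom{[k]}{2}\}$, of size $\binom{k}{2}$ with one edge per pair: it covers every $w_{\{i,j\}}$, and for $c_{i,j,j',t}$ both $e_{\{i,j\}}=v_iv_j$ and $e_{\{i,j'\}}=v_iv_{j'}$ have the \emph{same} $V_i$-endpoint $v_i$, so whichever value $\sigma(v_i)_t$ takes, one of the two edges is adjacent to $c_{i,j,j',t}$; hence $X$ covers $U$, giving $opt(I)\le\binom{k}{2}$ and the ``moreover'' clause. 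For soundness, let $X$ cover $U$ with $|X|\le\binom{k}{2}$; covering the selectors forces $X=\{e_{\{i,j\}}:\{i,j\}\in\binom{[k]}{2}\}$ with $e_{\{i,j\}}\in S_{\{i,j\}}$, and write $a_{ij}\in V_i$ for the endpoint of $e_{\{i,j\}}$ in $V_i$. If $\sigma(a_{ij})\ne\sigma(a_{ij'})$ for some $i$ and $j\ne j'$, pick $t$ with $\sigma(a_{ij})_t\ne\sigma(a_{ij'})_t$ and orient the pair so $\sigma(a_{ij})_t=0$, $\sigma(a_{ij'})_t=1$; then $c_{i,j,j',t}$ is adjacent to no edge of $X$ — among $S_{\{i,j\}}$ only $e_{\{i,j\}}\in X$ and it has $t$-th bit $0$, among $S_{\{i,j'\}}$ only $e_{\{i,j'\}}$ and it has $t$-th bit $1$, and no other group touches $c_{i,j,j',t}$ — contradicting that $X$ is a cover. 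Hence for each $i$ all $a_{ij}$ ($j\ne i$) coincide with a common $v_i\in V_i$, and since $e_{\{i,j\}}$ is an edge from $a_{ij}=v_i$ to $a_{ji}=v_j$, the set $\{v_1,\dots,v_k\}$ is a $k$-clique, contradiction; so $opt(I)>\binom{k}{2}$.

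The main obstacle is precisely the design of the consistency gadget: the first thing one tries is a single element per class $i$, coordinate $t$ and bit value $b$, but since ``being covered'' is a monotone OR over the selected edges, such elements can only certify that the $V_i$-endpoints are \emph{not all equal}, which is the wrong quantifier; it is the use of \emph{pairwise} comparisons $(j,j')$ — one could even use only a chain of consecutive comparisons, bringing the $k^3$ down to $k^2$ — that converts the AND-type condition ``all endpoints equal'' into something an OR-cover can enforce, at the price of a $k^{O(1)}$ blow-up in $|U|$. The other delicate point, which makes soundness go through, is insisting that each $c_{i,j,j',t}$ be adjacent to edges of the two relevant groups \emph{and to no others}, so that in the forced solution only $e_{\{i,j\}}$ and $e_{\{i,j'\}}$ can hope to cover it.
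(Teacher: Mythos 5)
Your proof is correct and is essentially the same construction as the paper's, just unpacked: the paper's universe is $[k]\times[k-1]^{\{0,1\}}\times[\log n]$ with $\{v_i,v_j\}$ covering $(i,f,\ell)$ iff $f(\mathrm{encode}(v_i)[\ell])=\sigma_i(j)$, and the elements with \emph{constant} $f$ are precisely your selectors $w_{\{i,j\}}$ (up to harmless redundancy over $\ell$), while the elements with \emph{non-constant} $f$ are precisely your consistency elements $c_{i,j,j',t}$ (with $j,j'$ read off $f(1),f(0)$ via $\sigma_i^{-1}$). The soundness argument — first force one edge per class-pair via the constant/selector elements, then force the $V_i$-endpoints to coincide via a bitwise cross-test at a coordinate where two candidate endpoints differ — is the same in both.
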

\begin{proof}
We will construct a set cover instance $I$ such that if $G$ has a $k$-clique, then we can select its $\binom{k}{2}$ edges to cover the whole universe set. 
For every $v\in V(G)$,  denote by $encode(v)\in\{0,1\}^{\log n}$ the binary string representation of $v$. For every $\ell\in[\log n]$, the $\ell$th bit of $encode(v)$ is  $encode(v)[\ell]$. 
For every $i\in [k]$, let $\sigma_i : [k]\setminus \{i\}\to [k-1]$ be an arbitrary bijection. Our target set cover instance $I=(S,U,E)$ is defined  as follows.
\begin{itemize}
\item $S=E(G)=\bigcup_{\{i,j\}\in\binom{[k]}{2}}S_{\{i,j\}}$, where $S_{\{i,j\}}=\{\{v_i,v_j\} : v_i\in V_i,v_j\in V_j,\{v_i,v_j\}\in E(G)\}$.
\item $U=[k]\times [k-1]^{\{0,1\}}\times [\log n]$.
\item For $s=\{v_i,v_j\}\in S$ and $u=(i,f,\ell)\in U$ we add $\{s,u\}$ into $E$ if
\[
\text{$v_i\in V_i$, $v_j\in V_j$ and $f(encode(v_i)[\ell])=\sigma_{i}(j)$.}
\]
\end{itemize}
The set cover instance $I$ satisfies the following conditions.
\begin{itemize}
\item If $G$ contains a $k$-clique, then there exists a $\binom{k}{2}$-sized subset of $S$ which contains exactly one vertex from each $S_{\{i,j\}}$ ($\{i,j\}\in\binom{[k]}{2}$)  that can cover $U$. 
Suppose that $v_1\in V_1,\ldots,v_k\in V_k$ induce a $k$-clique. Let $X=\{\{v_i,v_j\}:\{i,j\}\in \binom{[k]}{2}\}$. We will show that $X$ covers the whole set $U$. For any $(i,f,\ell)\in U$, let $b=encode(v_i)[\ell]$. Since $f(b)\in [k-1]$, there must exist a $j\in [k]\setminus \{i\}$ such that $\sigma_i(j)=f(b)$.
By the definition of $E$, $\{v_i,v_j\}$ is adjacent to $(i,f,\ell)$ .
\item If $G$ does not contain a $k$-clique, then $opt(I)>\binom{k}{2}$. Let $X\subseteq S$ be a set such that $|X|\le \binom{k}{2}$ and $X$ covers $U$.  

For each $\{i,j\}\in\binom{[k]}{2}$, define
\[
X_{\{i,j\}}=\{\{v_i,v_j\} : v_i\in V_i, v_j\in V_j, \{v_i,v_j\}\in X\}.
\]
We claim that for every $\{i,j\}\in\binom{[k]}{2}$, $|X_{\{i,j\}}|>0$. Otherwise let $f(0)=f(1)=\sigma_i(j)$ and consider the vertex  $(i,f,1)\in U$. According to the definition of $E$, if a vertex $\{v,u\}\in S$ covers $(i,f,1)$, then either $v$ or $u$ must be in $V_i$. Let us assume $v\in V_i$ and $u\in V_{j'}$ for some $j'\in [k]\setminus \{i\}$. We must have $f(encode(v_i)[1])=\sigma_i(j')$. However, if $j\neq j'$, then $f(0)=f(1)=\sigma_i(j)\neq \sigma_i(j')$.

Since $\binom{k}{2}\ge |X|=\sum_{\{i,j\}\in \binom{[k]}{2}}|X_{\{i,j\}}|$ and $|X_{\{i,j\}}|>0$, we conclude that $|X_{\{i,j\}}|=1$ for all $\{i,j\}\in\binom{[k]}{2}$.

For every $i\in [k]$ and distinct $j,j'\in [k]\setminus\{i\}$, let $\{\{v,v_j\}\}= X_{\{i,j\}}$ and $\{\{v',v_{j'}\}\}= X_{i,j'}$, where $v,v'\in V_i$, we claim that $v=v'$. Otherwise, since $v\neq v'$ there exists $\ell\in[\log n]$ such that $encode(v)[\ell]\neq encode(v')[\ell]$. Now consider a function $f$ with $f(encode(v')[\ell])=\sigma_i(j)$ and $f(encode(v)[\ell])=\sigma_i(j')$. The vertex $(i,f,\ell)$ must be covered by some $\{x,y\}$ with $x\in V_i$ and $y\in V_h$ such that $\sigma_i(h)=f(encode(v)[\ell])\in\{\sigma_i(j),\sigma_i(j')\}$. We must have $y\in V_j$ or $y\in V_{j'}$. Since $|X_{\{i,j\}}|=|X_{\{i,j'\}}|=1$, we deduce that either $\{x,y\}=\{v,v_j\}$ or $\{x,y\}=\{v',v_{j'}\}$. However, if $\{x,y\}=\{v,v_j\}$, we must have $\sigma_i(j)=f(encode(v)[\ell])=\sigma_i(j')\neq \sigma_i(j)$, a contradiction. Similarly, if $\{x,y\}=\{v',v_{j'}\}$, then  $\sigma_i(j')=f(encode(v')[\ell])=\sigma_i(j)\neq \sigma_i(j')$. We conclude that the vertex $(i,f,\ell)$ can not be covered by $X$.

Now we have for every $i\in[k]$, there exists a $v_i\in V_i$ such that 
\[
\{v_i\}=\bigcap_{j\in [k]\setminus \{i\}, e\in X_{\{i,j\}}}e.
\] 
Obviously,  for every $\{i,j\}\in \binom{[k]}{2}$, $\{\{v_i,v_j\}\}= X_{\{i,j\}}$. This implies that $\{v_1,v_2,\ldots,v_k\}$ is a $k$-clique in $G$.
\end{itemize}
\end{proof}

\noindent\textit{Proof of Theorem~\ref{thm:w1version}.} Given an $n$-vertex graph $G$ and a positive integer $k$, we invoke Lemma~\ref{lem:clique2setcover} to obtain a set cover instance $I=(S,U,E)$ with $|S|=|E(G)|$ and $|U|\le k^3\log n$ satisfying (i) and (ii). Let $m=|S|$. Then we use  Lemma~\ref{lem:constgapgadget} to construct a $(\binom{k}{2},m,n^{O(1)},\frac{\log m}{\log\log m},\frac{\log m}{\log\log m}^{1/\binom{k}{2}})$-gap-gadget $T$ in $m^{O(1)}=n^{O(1)}$ time. Applying Lemma~\ref{lem:reduction} on $I$ and $T$, we finally obtain our target set cover instance $I'=(S',U',E')$ with the following properties:
\begin{itemize}
\item if $G$ has a $k$-clique, then $opt(I')=\binom{k}{2}$,
\item if $G$ has no $k$-clique, then $opt(I')>\left(\frac{\log m}{\log\log m}\right)^{1/\binom{k}{2}}$,
\item  $|S'|=|E(G)|=m$,
\item $|U'|=(k^3\log n)^{\log m/\log\log m}=m^{1+o(1)}$.
\end{itemize}
Let $N=|U'|+|S'|$. We have $N=n^{O(1)}$.  Since $\epsilon$ is an unbounded computable function, there is  a computable function  $g: \mathbb{N}\to\mathbb{N}$ such that $k'=g(k)>\binom{k}{2}$ and $\epsilon(k')>\binom{k}{2}$.  When $n$ is large enough,  
\[
\frac{\log m}{\log\log m}^{1/\binom{k}{2}}\ge\frac{\log N}{O(\log\log N)}^{1/\binom{k}{2}}\ge (\log N)^{1/\epsilon(k')}.
\]  
Any $f(k')\cdot N^{O(1)}$ time algorithm that can distinguish between $opt(I')\le k'$ and $opt(I')>(\log N)^{\frac{1}{\epsilon(k')}}$ can be used to decide if an input graph $G$ has $k$-clique in $f(g(k))n^{O(1)}$ time.


\section{Conclusion}
We have improved the hardness approximation factor for the parameterized set cover problem using a simple reduction. Our result shows that in order to prove inapproximability of parameterized set cover, it suffices to prove the hardness of set cover problem with small universe set. A natural question is:

Is there any algorithm that can, given an $n$-vertex  set cover instance $I$ and an integer $k$, outputs a new instance $I'$ and an integer $k'$ in $f(k)\cdot n^{O(1)}$ time for some computable function $f : \mathbb{N}\to\mathbb{N}$ such that
\begin{itemize}
\item $k'=g(k)$ for some computable function $g : \mathbb{N}\to\mathbb{N}$,
\item $opt(I)\le k$ if and only if $opt(I')\le k'$,
\item $|U(I')|\le h(k)\cdot (\log |S(I')|)^{O(1)}$ for some computable function $h : \mathbb{N}\to \mathbb{N}$.
\end{itemize}

A positive answer to the above question would imply that \textsc{SetCover} parameterized by the optimum solution size has no $(\log n)^{1/\epsilon(k)}$-approximation FPT algorithm assuming $\W 2\neq FPT$. Of course, if we just want a $\rho$-factor hardness of approximation, then it suffices to have $|U(I')|\le h(k) |S(I')|^{O(1/\rho^k)}$.
Note that using  Dynamic Programming, \textsc{SetCover} can be solved  in $2^{|U(I)|}(|U(I)|+|S(I)|)^{O(1)}$ time \cite{cygan2015parameterized}. We do not expect to reduce the size of universe set below $o(k\log n)$ under ETH.

Our hardness result is far from matching the $(1+\ln n)$ approximation ratio of the greedy algorithm in polynomial time. Could it be the case that there exists a $(\ln n)^{1/\rho(k)}$-approximation algorithm for \textsc{SetCover} with running time $n^{k-\epsilon}$? What is the best approximation ratio we can achieve for parameterized set cover in $n^{k-\epsilon}$ time?

\bigskip

\noindent\textbf{Acknowledgement}
This work was supported by JSPS KAKENHI Grant Number JP18H05291.
The author wishes to thank the anonymous referees for their detailed comments.

\bibliography{ref} 
\end{document}